\documentclass[11pt, letterpaper]{article}
\usepackage{fullpage}
\usepackage{amsthm}
\usepackage{amsmath,amssymb,amsfonts,nicefrac}
\usepackage{xspace}
\usepackage{color}
\usepackage{url}
\usepackage{hyperref}
\usepackage{bm}
\usepackage{bbm}
\usepackage{times}

\usepackage{enumitem}

\newtheorem{thm}{Theorem}[section]

\newtheorem{lemma}[thm]{Lemma}

\newtheorem{claim}[thm]{Claim}

\newtheorem{definition}[thm]{Definition}
\newtheorem{remark}[thm]{Remark}

\newtheorem{fact}[thm]{Fact}

\newcommand\card[1]{\left| {#1} \right|}
\newcommand\sett[2]{\left\{ \left. #1 \;\right\vert #2 \right\}}

\newcommand\Prob[2]{{\Pr_{#1}\left[ {#2} \right]}}

\newcommand\Expect[2]{{\mathop{\mathbb{E}}_{#1}\left[ {#2} \right]}}

\newcommand\eps{\varepsilon}

\renewcommand\geq{\geqslant}
\renewcommand\leq{\leqslant}

\newcommand{\rom}[1]{\uppercase\expandafter{\romannumeral #1\relax}}

\title{Parallel Repetition for $k$-Player XOR Games}
\author{Mark Braverman
\thanks{Department of Computer Science, Princeton University.
Research supported in part by the NSF Alan T. Waterman Award, Grant No. 1933331, a Packard Fellowship in Science and Engineering, and the Simons Collaboration on Algorithms and Geometry. }
\and Dor Minzer \thanks{Department of Mathematics, Massachusetts Institute of Technology. Supported by NSF CCF award 2227876, NSF CAREER award 2239160 and NSF DMS award 2349024.}}
\date{\vspace{-5ex}}
\begin{document}
\maketitle
\begin{abstract}
In a $k$-XOR game $\Psi$ over $\mathbb{F}_p$, a verifier samples challenges $(x_1,\ldots,x_k)\sim\mu$ where $\mu$ is a distribution over $\Sigma_1\times\ldots\times \Sigma_k$, sends $x_i$ to player $i$ and receives an answer $f_i(x_i)\in\mathbb{F}_p$, and accepts if and only if $f_1(x_1)+\ldots+f_k(x_k)=t(x_1,\ldots,x_k)$, where $t\colon \Sigma_1\times\ldots\times \Sigma_k\to\mathbb{F}_p$ is a fixed map. We show that if $\Psi$ has value strictly less than $1$, then the value of the $n$-fold repeated game $\Psi^{\otimes n}$ is at most $2^{-cn}$ where $c = c(\Psi)>0$ is a fixed constant. This extends a previous result of [Bhangale-Braverman-Khot-Liu-Minzer] showing exponential decay for some $3$-XOR games.

Our proof uses a relaxation of the value of games studied in the quantum setting, called the commuting value of a game. We show a transformation that maps a $k$-player game $\Psi$ with value strictly less than $1$ to a $2k$-player game $\Psi'$ with commuting value less than $1$, such that $\mathsf{val}(\Psi^{\otimes n})\leq \mathsf{val}((\Psi')^{\otimes n})$. We prove that $\mathsf{val}(\Psi')<1$ has a direct proof using nothing more than the Cauchy-Schwarz inequality, which we are able to tensorize to get that $\mathsf{val}((\Psi')^{\otimes n})$ vanishes exponentially fast in $n$.
\end{abstract}

\section{Introduction}
Parallel repetition, introduced by~\cite{FortnowRompelSipser1994}, is a powerful tool that amplifies the soundness guarantee of interactive protocols. A $k$-player-$1$-round game $\Psi$ consists of a verifier and $k$ non-communicating provers, where player $i$ has a set of viable challenges $\Sigma_i$ and viable responses $A_i$. The game is defined by a probability distribution $\mu$ over $\Sigma_1\times\ldots\times\Sigma_k$ and a predicate $V\colon \prod\limits_{i=1}^{k} \Sigma_i\times \prod\limits_{i=1}^{k} A_i\to \{0,1\}$, and proceeds as follows:
\begin{enumerate}
    \item The verifier samples challenges $(x_1,\ldots,x_k)\sim \mu$, and sends $x_i$ to player $i$.
    \item For each $i\in [k]$, player $i$ chooses a response $a_i$ based on their challenge $x_i$ and sends it to the verifier.
    \item The verifier accepts if and only if $V(x_1,\ldots,x_k,a_1,\ldots,a_k)=1$
\end{enumerate}
The value of $\Psi$, denoted by $\mathsf{val}(\Psi)$, is defined as the maximum acceptance probability of the verifier under all players' strategies. The above framework captures many interesting settings in theoretical computer science and particularly in the theory of probabilistically checkable proofs; for example, the PCP theorem~\cite{FGLSS,AS,ALMSS} has the following equivalent formulation: there is a polynomial time reduction that maps a $3$-CNF formula $\varphi$ of size $n$ to a $2$-player-$1$-round game $\Psi$ with question sets of size $n^{O(1)}$, and response sets of size $O(1)$, such that
\begin{enumerate}
    \item \textbf{Completeness:} if $\varphi$ is satisfiable, then $\mathsf{val}(\Psi)=1$.
    \item \textbf{Soundness:} if $\varphi$ is unsatisfiable, then $\mathsf{val}(\Psi)\leq 0.99$.
\end{enumerate}
This gap between the completeness and soundness cases is often too mild, and one would like to amplify it. The main goal of parallel repetition is precisely to achieve that, while at the same time keeping the same number of players and the same number of rounds. For a parameter $n$, the $n$-fold repeated game $\Psi^{\otimes n}$ is defined as follows (we describe the verifier, and the question and answer sets are implicit):
\begin{enumerate}
    \item The verifier samples challenges $(x_1(j),\ldots,x_k(j))\sim \mu$ independently for each $j=1,\ldots,n$, and sends $x_i = (x_i(1),\ldots,x_i(n))$ to player $i$.
    \item For each $i\in [k]$, player $i$ chooses a response $(a_i(1),\ldots,a_i(n))$ based on $x_i$ and sends it to the verifier.
    \item The verifier accepts if and only if $V(x_1(j),\ldots,x_k(j),a_1(j),\ldots,a_k(j))=1$
    for all $j=1,\ldots,n$.
\end{enumerate}
It is easy to see that if $\mathsf{val}(\Psi)=1$, then $\mathsf{val}(\Psi^{\otimes n})=1$. It is much less clear what happens if $\mathsf{val}(\Psi)<1$, and the parallel repetition conjecture of~\cite{FortnowRompelSipser1994} predicts that $\mathsf{val}(\Psi^{\otimes n})\leq 2^{-\Omega_{\Psi}(n)}$.
\subsection{Prior Work}
The best known result for general games is due to~\cite{Verbitsky1994}, and asserts that $\lim_{n\rightarrow \infty}\mathsf{val}(\Psi^{\otimes n})=0$ (see also~\cite{Mittal2025}). Much better results are known for more restricted classes of games, as we discuss below.
\vspace{-2ex}
\paragraph{$2$-Player Games:} the case of $2$-player games, which is very important due to its connection to $2$-query PCPs, was heavily studied in the literature, and is by now fully understood. The work~\cite{Raz1998} was the first to establish that for a general $2$-player game $\Psi$ with value less than $1$, $\mathsf{val}(\Psi^{\otimes n})$ indeed decays exponentially fast in $n$. Follow up works~\cite{Holenstein2009,Rao2011,DinurSteurer2014,BravermanGarg2015} improved upon it, getting nearly tight decay rates for general $2$-player games as well as for the important class of projection $2$-player games. With the exception of~\cite{DinurSteurer2014}, all proofs use an approach based on information theory.
\vspace{-2ex}
\paragraph{$k$-Player Games:} the case of $k$-player games for $k\geq 3$ is considerably more challenging. The work~\cite{DinurHarshaVenkatYuen2017} identified a subclass of $k$-player games, called connected games, for which the information theoretic approach applies and established an exponential decay. They suggested that strong correlations between the questions of the players is a bottleneck for their technique, and suggested to study a concrete $3$-player game known as the GHZ game. The works~\cite{HolmgrenRaz2020,GirishEtAl2021} showed that the $n$-fold repeated GHZ game decays at least as quickly as $\frac{1}{n^c}$ for some constant $c>0$, and later generalized their argument to all $3$-player games with binary questions~\cite{GirishMittalRazZhan2022,GirishEtAl2022}. Subsequent works discovered a connection between the GHZ game~\cite{BravermanKhotMinzer2023} and inverse theorems for CSPs~\cite{BhangaleKhotMinzer2025I,BhangaleKhotMinzer2025II,BhangaleKhotMinzer2023III,BhangaleKhotMinzer2024IV,BhangaleKhotMinzer2025V,BhangaleKhotLiuMinzer2024VI,BhangaleKhotLiuMinzer2024VII}, leading to an exponential decay rate for many $3$-XOR games~\cite{BhangaleEtAl2025} and for $k$-player projection games~\cite{BhangaleEtAl2024}. Other words are~\cite{BhangaleEtAl2026}, who obtained decay rates of the form $\frac{1}{\log^{(C)} n}$ (where $\log^{(C)}$ is the $C$-fold iterated log) for $k$-player games in which the distribution over questions has no Abelian embeddings, and~\cite{LiuLovettMittal2026}, who obtained concentration bounds for games with GHZ-questions.

\vspace{-1ex}
\paragraph{Our starting point:} the current paper was born in an attempt to give a more direct and transparent proof for the result of~\cite{BravermanKhotMinzer2023} for the GHZ game, and we begin by explaining their approach. In the GHZ game the verifier samples $3$ bits $x_1,x_2,x_3\in \mathbb{F}_2$ conditioned on $x_1+x_2+x_3=0$, sends a single bit to each player, receives answers $a_1,a_2,a_3$ and accepts if and only if $a_1+a_2+a_3 = x_1\lor x_2\lor x_3\pmod{2}$. In the $n$-fold parallel repetition of the GHZ game, the players receive $x_1,x_2,x_3\in \mathbb{F}_2^{n}$ where $x_1+x_2+x_3 = \vec{0}$, respond with $f_1(x_1),f_2(x_2),f_3(x_3)\in \mathbb{F}_2^n$, and win if $f_1(x_1)+f_2(x_2)+f_3(x_3) = x_1\lor x_2\lor x_3$, where the OR is taken coordinate-wise. Algebraically, the winning probability can be expressed as
\begin{align}\label{eq:intro}
&\Expect{(x_1,x_2,x_3)\sim \mu^{\otimes n}}
{
\prod\limits_{j=1}^{n}\frac{1}{2}
(
(-1)^{f_1(x_1)(j)+f_2(x_2)(j)+f_3(x_3)(j)-x_1(j)\lor x_2(j)\lor x_3(j)}+1)}\notag\\
&=\frac{1}{2^{n}}
\sum\limits_{S\subseteq [n]}\Expect{(x_1,x_2,x_3)\sim\mu^{\otimes n}}
{(-1)^{f_{1,S}(x_1)+f_{2,S}(x_2)+f_{3,S}(x_3)-
\sum\limits_{j\in S}x_1(j)\lor x_2(j)\lor x_3(j)}},
\end{align}
where $f_{1,S}(x_1) = \sum\limits_{j\in S} f_1(x_1)(j)$ and $f_{2,S}$, $f_{3,S}$ are defined similarly. Thus, the problem of studying the value achieved by $f_1,f_2,f_3$ is reduced to understanding the direst sum value achieved by them: show that for a typical $S\subseteq [n]$ it holds that
\[
\card{\Expect{(x_1,x_2,x_3)\sim\mu^{\otimes n}}
{(-1)^{f_{1,S}(x_1)+f_{2,S}(x_2)+f_{3,S}(x_3)-
\sum\limits_{j\in S}x_1(j)\lor x_2(j)\lor x_3(j)}}}\leq 2^{-\Omega(n)}.
\]
Alas, this turns out to be false, and for each $S$ one can design $g_1,g_2,g_3\colon \mathbb{F}_2^n\to\mathbb{F}_2$ for which
\[
\card{\Expect{(x_1,x_2,x_3)\sim\mu^{\otimes n}}
{(-1)^{g_{1}(x_1)+g_{2}(x_2)+g_{3}(x_3)-
\sum\limits_{j\in S}x_1(j)\lor x_2(j)\lor x_3(j)}}}
\geq \Omega(1).
\]
The reason for that is that the GHZ game has an Abelian embedding, and we define this notion more generally below.
\begin{definition}
    A $k$-player game $\Psi = (\Sigma_1,\ldots,\Sigma_k,A_1,\ldots,A_k,\mu,V)$ is called a $k$-XOR game if for some prime $p\in\mathbb{N}$, we have $A_1=\ldots=A_k=\mathbb{F}_p$ and there exists $t\colon \Sigma_1\times\ldots\times\Sigma_k\to\mathbb{F}_p$ such that
    \[
    V(x_1,\ldots,x_k,a_1,\ldots,a_k) = 1
    \iff
    a_1+\ldots+a_k = t(x_1,\ldots,x_k).
    \]
\end{definition}
In words, a $k$-XOR game is a game in which the predicate the verifier checks is linear over the responses of the players.
\begin{definition}\label{def:Abelian_embed}
    We say a $k$-XOR game has an Abelian embedding if there are functions $\sigma_i\colon \Sigma_i\to \mathbb{C}$ with absolute value $1$ pointwise such that for all $(x_1,\ldots,x_k)\in\mathsf{supp}(\mu)$ we have that
    \[
    \prod\limits_{i=1}^{k}\sigma_i(x_i) = \omega_p^{t(x_1,\ldots,x_k)},
    \]
    where $\omega_p = e^{\frac{2\pi\mathbf{i}}{p}}$ is the primitive root of unity of order $p$.
\end{definition}
In words, an Abelian embedding is a perfect solution to the $k$-player, except that the ``answers'', $\sigma_i(x_i)$, do not come from the set of viable answers. Nevertheless, existence of such non-valid solution presents an issue to the study of the direct sum value. For instance, for the GHZ game one has the $(\mathbb{Z}_4,+)$ structure
\[
(-1)^{x_1\lor x_2\lor x_3} = \mathbf{i}^{x_1}\mathbf{i}^{x_2}\mathbf{i}^{x_3},
\]
for all valid tuples of challenges $x_1,x_2,x_3$, and this can be used to construct $g_1,g_2,g_3$ as above. Due to this reason, the work~\cite{BravermanKhotMinzer2023} takes a different approach, and uses additive combinatorics tools to argue that any successful strategy for the $n$-fold repeated game must make use of this $\mathbb{Z}_4$-structure, and then analyze it directly to show that it does not work.

\subsection{Our Results}
Motivated by the above discussion, we tried to find a simpler proof for the exponential decay of the value of the $n$-fold repeated game of the GHZ game. We found such a proof, and upon further investigation as we discuss in Section~\ref{sec:techniques}, we managed to make it work for all $k$-XOR games, giving our main result below.
\begin{thm}\label{thm:main}
    Suppose that $\Psi$ is a $k$-XOR game and that $\mathsf{val}(\Psi)< 1$. Then there exists $\eps=\eps(\Psi)>0$ such that  $\mathsf{val}(\Psi^{\otimes n})\leq (1-\eps)^n$ for all $n\in\mathbb{N}$.
\end{thm}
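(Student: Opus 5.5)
We follow the route announced in the abstract. The idea is to replace $\Psi$ by a $2k$-player game $\Psi'$ whose value is controlled by a Cauchy--Schwarz argument, and then tensorize that argument over the $n$ coordinates.

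\textbf{Step 1: Fourier expansion.} We write the winning probability of any strategy $f_1,\ldots,f_k$ for $\Psi^{\otimes n}$ as
\[
\Expect{x\sim\mu^{\otimes n}}{\prod_{j=1}^n \frac{1}{p}\sum_{a_j\in\mathbb{F}_p}\omega_p^{a_j(f_1(x_1)(j)+\ldots+f_k(x_k)(j)-t(x(j)))}}.
\]
Expanding the product gives $p^{-n}\sum_{a\in\mathbb{F}_p^n}$ of direct-sum correlations. Each correlation has the form $\mathbb{E}\bigl[\prod_i F_{i,a}(x_i)\,\overline{\prod_j \omega_p^{a_j t(x(j))}}\bigr]$, where $F_{i,a}(x_i)=\omega_p^{\langle a,f_i(x_i)\rangle}$ is a function into the unit circle.

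\textbf{Step 2: the doubled game $\Psi'$.} As the introduction explains, bounding these correlations one $a$ at a time fails because of Abelian embeddings. We therefore pass to a $2k$-player game $\Psi'$. Its verifier samples $x\sim\mu$ and $x'\sim\mu$ coupled so that a Cauchy--Schwarz step with respect to some player's question is built in. Concretely, we resample everything except $x_1$, and similarly for each player in turn. Player $i$ receives $x_i$ and player $k+i$ receives $x_i'$, and the predicate compares $t(x)-t(x')$ with the sum of the $2k$ answers.

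We will show $\mathsf{val}(\Psi^{\otimes n})\le \mathsf{val}((\Psi')^{\otimes n})$, or a polynomially related bound. The argument runs Cauchy--Schwarz on the winning probability of $\Psi^{\otimes n}$, with each copy of the strategy used by one of the two blocks of players.

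\textbf{Step 3: the commuting value is below $1$.} We then show $\Psi'$ has commuting value below $1$, meaning that even operator strategies $U_i$ (commuting unitaries of order $p$) cannot win perfectly. The key claim is that a perfect commuting strategy for $\Psi'$ would yield, through the doubling, a perfect classical assignment for $\Psi$.

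The reasoning is as follows. The doubling forces the operators to satisfy $U_i(x_i)U_i(x_i')^{*}$-type relations that make them effectively scalar on the relevant joint eigenspaces. Restricting to a common eigenvector turns them into an Abelian embedding, and the symmetric structure then forces genuine $\mathbb{F}_p$-valued answers. This would contradict $\mathsf{val}(\Psi)<1$.

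This step is where Abelian embeddings must be ruled out. I expect it to be the main obstacle: the construction of $\Psi'$ must be chosen exactly so that embeddings into $\mathbb{C}$ do not survive the doubling.

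\textbf{Step 4: quantitative bound and tensorization.} Once we know the commuting value is below $1$, we want an explicit certificate. This would be a sum-of-squares or Cauchy--Schwarz identity of the form
\[
1-\mathsf{val}_{\mathrm{comm}}(\Psi') \ge \eps \quad\text{via}\quad \mathbf{1}-W = \sum_\ell M_\ell^{*}M_\ell + \eps\mathbf{1},
\]
where $W$ is the game operator. Such a certificate tensorizes, because commuting operator strategies for $(\Psi')^{\otimes n}$ are closed under restricting to one coordinate. Peeling off coordinates one at a time with the identity then gives $\mathsf{val}((\Psi')^{\otimes n})\le (1-\eps)^n$.

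The delicate point here is that the certificate must use only the Cauchy--Schwarz inequality on operators. This is what lets it apply to the conditional operators arising after fixing the other coordinates. Combining this with Step 2 gives Theorem~\ref{thm:main}.
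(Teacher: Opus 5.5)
\textbf{Gap 1: Step 3 is false for the $\Psi'$ you build.} Your $\Psi'$ couples two challenges $x,x'\in\mathsf{supp}(\mu)$ that share some coordinates, and tests the $2k$ answers against $t(x)-t(x')$. This is essentially the paper's gluing operation, i.e.\ a Cauchy--Schwarz step, and it preserves every commuting embedding of $\Psi$. If $\prod_i U_i(x_i)v=\omega_p^{t(x)}v$, let players $1,\ldots,k$ use $U_i(x_i)\otimes I$ and players $k+1,\ldots,2k$ use $I\otimes\overline{U_i(x_i')}$ on $v\otimes\bar v$; this wins perfectly for any such coupling. The GHZ game has $\mathsf{val}=3/4$ but a perfect quantum strategy with order-$2$ observables (and a $\mathbb{Z}_4$ Abelian embedding). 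So your $\Psi'$ has commuting value $1$, and your key claim that a perfect commuting strategy for $\Psi'$ yields a classical assignment for $\Psi$ fails.

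The missing idea is the BLR trick. Players $i$ and $i'$ receive $(x_i,\alpha_i)$ and $(x_i,\alpha_i+\gamma)$ with $\alpha_i,\gamma\in\mathbb{F}_p$ uniform, and the answers are tested against $\gamma\,t(x)$. Setting $\gamma=0$ symmetrizes $i$ and $i'$. After that, the operators $E_i(x_i,\alpha)$, restricted to an invariant subspace, become commuting representations of $(\mathbb{F}_p,+)$. A common eigenvector then gives characters $\omega_p^{\sigma_i(x_i)\alpha}$ with $\sum_i\sigma_i(x_i)=t(x)$. This rules out commuting embeddings with unitaries of arbitrary order, not just order $p$, which is exactly the hypothesis Lemma~\ref{lem:nocommute_orderedcontra} needs. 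The bound $\mathsf{val}(\Psi^{\otimes n})\le\mathsf{val}((\Psi')^{\otimes n})$ then holds directly via the coordinatewise strategies $B_i=-\alpha A_i$, $B_{i'}=\alpha A_i$, with no Cauchy--Schwarz.

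\textbf{Gap 2: Step 4 does not tensorize.} Even granting an SOS certificate for $\Psi'$, nothing makes it tensorize. In $(\Psi')^{\otimes n}$ a player's answer in coordinate $j$ depends on its questions in all coordinates, so fixing the other coordinates does not decouple the winning events. ``Peeling off coordinates'' is precisely the difficulty any parallel repetition theorem must overcome, and you assume it away.

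The paper never peels. It Fourier-expands $\mathsf{val}((\Psi')^{\otimes n})$ (your Step 1, but applied to $\Psi'$) into correlations $\mathbb{E}\bigl[\prod_i F_i(x_i)\prod_j\omega_p^{-a_jt_j}\bigr]$, where the $F_i$ are arbitrary $1$-bounded functions on $\Sigma_i^n$. It turns the ordered contradiction into a sequence of gluings that strictly decrease the lexicographic potential $H_N$ (Lemma~\ref{lem:decreasing_gluing}). Each gluing is one Cauchy--Schwarz step valid for arbitrary $1$-bounded functions, and it commutes with tensoring: $(\Psi^{\otimes n})_{\mathrm{glue}(I)}=(\Psi_{\mathrm{glue}(I)})^{\otimes n}$. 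After $T$ steps no players remain, and the correlation factors as $\prod_j\bigl|\mathbb{E}\,\omega_p^{a_jt}\bigr|\le(1-\delta)^{|\{j:a_j\neq0\}|}$. This gives $(1-\eps)^{d}$ per Fourier term and $(1-\eps')^n$ after summing over $a$.
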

Theorem~\ref{thm:main} improves upon the result of~\cite{BhangaleEtAl2025} in two ways. First, it applies to $k$-XOR games for all $k\geq 2$, whereas the result of~\cite{BhangaleEtAl2025} applies only to $k=3$, and secondly it requires no assumption on the distribution over questions $\mu$, whereas the result of~\cite{BhangaleEtAl2025} requires that the game has no $(\mathbb{Z},+)$ embedding.
\begin{remark}
The assertion in Theorem~\ref{thm:main} holds for the more general class of $k$-XOR games over any finite Abelian group, and we chose to state it only for games over prime fields for the sake of presentation. The key difference is that one generalizes the BLR-trick, as discussed in Section~\ref{sec:techniques}, from using characters over $\mathbb{F}_p$ to using characters over the Abelian group.
\end{remark}

Unlike previous proofs of exponential parallel repetition
for the GHZ game and $3$-XOR games, which relied on
additive combinatorics~\cite{BravermanKhotMinzer2023,BhangaleEtAl2025},
our proof relies on Cauchy--Schwarz inequalities and
their tensorization.
This raises the question of whether all parallel repetition
results can be proved using such techniques.
Even for $2$-player games, it remains unclear whether
the analytic approach of Dinur and Steurer~\cite{DinurSteurer2014}
can recover the general-game parallel repetition theorems
of Raz~\cite{Raz1998}, Holenstein~\cite{Holenstein2009},
and Braverman--Garg~\cite{BravermanGarg2015}, as well as
the concentration bound of Rao~\cite{Rao2011}.

\subsection{Techniques}\label{sec:techniques}
In this section we discuss our techniques, and it is first useful to discuss the result of~\cite{BhangaleEtAl2025}. Both approaches are motivated from the notion of Abelian embeddings, but they take very different routes.
\subsubsection{The Previous Approach to XOR Games}
 The proof of~\cite{BhangaleEtAl2025} has two cases, depending on if $t$ has an Abelian embedding or not.
\begin{enumerate}
    \item In case $t$ has an Abelian embedding, the proof uses techniques from additive combinatorics to essentially argue about the structure any successful strategy must have, extending the argument of~\cite{BravermanKhotMinzer2023}, and then directly showing that such strategies do not work.
    \item In case  $t$ has no Abelian embedding, the proof relies on a tensorization argument along the lines of~\cite{BhangaleKhotMinzer2025II}. It shows that for each $S$ of size $\Omega(n)$, the corresponding summand on the right hand side of~\eqref{eq:intro} is at most $2^{-\Omega(n)}$. Morally, the idea here is that the fact that $t$ has no Abelian embedding can be interpreted as the existence of $\delta>0$ such that for each $g_i\colon \Sigma_i\to\mathbb{C}$ with absolute value $1$, we have that
    \[
    \card{\Expect{(x_1,x_2,x_3)\sim\mu}{g_1(x_1)g_2(x_2)g_3(x_3)\omega_p^{t(x_1,x_2,x_3)}}}\leq 1-\delta.
    \]
    This can be thought of as a ``base case'' that can then be amplified.
\end{enumerate}
We note that the GHZ game falls in the first item above, which in some sense looks more ad-hoc than the second item. Indeed, our starting point was to investigate to what extent can one use tensorization ideas as in the second item to prove parallel repetition theorems.
\subsubsection{Our Approach}\label{sec:approach}
In this section we discuss our approach for the proof of Theorem~\ref{thm:main}, and we describe it in the way we arrived at it.
\paragraph{Cauchy-Schwarz Proofs:} drawing intuition from~\cite{BhangaleEtAl2025}, our first goal was to prove that any $k$-XOR game with no Abelian embedding has an exponentially decaying value under parallel repetition, and furthermore that there should be a ``direct'' proof of that fact. We were able to do it for $k=2$, where the idea is that if $\mathsf{val}(\Psi)<1$, then $\Psi$ can be viewed as a system of linear equations that has a contradiction. One then shows that a general such contradiction can be converted to a structured contradiction that can then be evidenced by several applications of the Cauchy-Schwarz inequality, and thus tensorizes. A simple demonstration is given by the inner-product game, where the distribution over questions $\mu$ has full support over $\mathbb{F}_2\times\mathbb{F}_2$, the players respond with $f(x)$, $g(y)$ respectively, and the verifier accepts if $f(x)+g(y) = x\cdot y$. Then a direct way to show that the value of this game is less than $1$ is by first noting that
\[
\card{\Expect{(x,y)\sim\mu}{(-1)^{f(x)}(-1)^{g(y)}(-1)^{xy}}}^2
\leq
\Expect{\substack{(x,y)\sim \mu\\ (x',y')\sim \mu~|~x'=x}}{(-1)^{g(y)+g(y')}(-1)^{x(y+y')}},
\]
and we denote the distribution over $(x,y,y')$ by $\mu'$. We have that
\[
\card{\Expect{\substack{(x,y,y')\sim \mu'}}{(-1)^{g(y)+g(y')}(-1)^{x(y+y')}}}^2
\leq
\Expect{\substack{(x,y,y')\sim \mu'\\ (x',y'',y''')\sim \mu'~|~(y'',y''')=(y,y')}}{(-1)^{(x+x')(y+y')}},
\]
which is less than $1$ as $(x+x')(y+y')$ is not constant over the support of the last distribution. Furthermore, this proof ``tensorizes'', in the sense that pretty much the same set of inequalities hold for the $n$-fold repeated game and give a bound of $2^{-\Omega(n)}$.\footnote{For general $2$-XOR games, a contradiction can be thought of as a list of integer coefficients $a_{i,j}$ to multiply the equation that contains variables $i,j$, such that $\sum_{i}a_{i,j}=\sum\limits_{j}a_{i,j}=0$ for all $i,j$ but $\sum\limits_{i,j}a_{i,j}t(i,j)\neq 0$. Using the Birkhoff–von Neumann theorem one can show that the coefficients matrix $A$ can be written as $A = \sum\limits_{t}\lambda_t C_t$ where $C_t$ has entries $\{-1,0,1\}$, and it is supported on an alternating row/column walk with sum $0$ inside the support of $A$. Thus, there is $C_t$ with $\sum\limits_{i,j}C_t(i,j)t(i,j)\neq 0$ and each row/column sum in $C_t$ is $0$, which corresponds to addition/subtraction of equations $e_1,\ldots,e_{m}$, where the second variable of $e_i$ gets canceled by the first variable in $e_{i+1}$. Such a contradiction can be converted to a sequence of applications of Cauchy-Schwarz.}
Next, we were able to find a non-trivial direct proof for some $3$-player games, such as the game considered in~\cite{braverman2026parity}. Therein, the verifier samples a permutation $\pi\in S_3$, sends $\pi(1),\pi(2),\pi(3)$ to player $1$, player $2$, player $3$ respectively, they respond with signs $a_1,a_2,a_3$ and the verifier accepts if and only if
$a_1a_2a_3 = \mathsf{sign}(\pi)$. While the result of~\cite{BhangaleEtAl2025} holds for this game and thus implies it has exponentially small value, it turns out to have a much more direct proof using only the Cauchy-Schwarz inequality (in Appendix~\ref{app:levi-civita-gluing} we give a different direct proof that follows a more general recipe).

This led us to suspect that any $k$-XOR game with no Abelian embedding should have an exponential decay under parallel repetition, and furthermore that this fact should be provable by several applications of the Cauchy-Schwarz inequality. Though ultimately this direct Cauchy-Schwarz approach failed, we were unable to resolve it for a while, and instead turned our attention to games with Abelian embeddings.
\vspace{-1ex}
\paragraph{The BLR-trick:} motivated by the suspicion above, one may try to reduce a $k$-XOR game $\Psi$ that has an Abelian embedding to an XOR game $\Psi'$ that has no Abelian embedding, and this turns out to be possible. The hope is that $\mathsf{val}(\Psi)<1$ means that while $\Psi$ may have an Abelian embedding, it cannot be one that only uses roots of unity of order $p$. Therefore, one can design a game $\Psi'$ that does some local correction that heavily uses the $\mathbb{F}_p$-arithmetic structure, thereby being completely incompatible with any other arithmetic structure. A natural candidate inspired by~\cite{BLR} is the game $\Psi'$, which has $2k$ players labeled by $1,\ldots,k$ and $1',\ldots,k'$. The question set of players $i$ and $i'$ are $\Sigma_i' = \Sigma_i\times \mathbb{F}_p$, and their responses are in $\mathbb{F}_p$.
The verifier proceeds as follows:
\begin{enumerate}
    \item Sample $(x_1,\ldots,x_k)\sim \mu$, sample $\gamma\in\mathbb{F}_p$ and $\alpha_1,\ldots,\alpha_k\in\mathbb{F}_p$ uniformly and independently.
    \item Send question $(x_i,\alpha_i)$ to player $i$ and $(x_i,\alpha_i+\gamma)$ to player $i'$.
    \item Upon receiving the answers $f_i(x_i,\alpha_i)$ and $f_{i'}(x_i,\alpha_i+\gamma)$, the verifier accepts if and only if
    \[
    \sum\limits_{i=1}^{k}f_{i'}(x_i,\alpha_i+\gamma)+f_{i}(x_i,\alpha_i) = \gamma t(x_1,\ldots,x_k).
    \]
\end{enumerate}
Note that if $g_i\colon \Sigma_i\to \mathbb{F}_p$ are strategies for $\Psi$, then the natural strategies $f_i(x_i,\alpha) = -\alpha g_i(x_i)$
and $f_{i'}(x_i,\alpha) = \alpha g_{i}(x_i)$ are valid strategies, and one gets that $\mathsf{val}(\Psi)\leq \mathsf{val}(\Psi')$ and $\mathsf{val}(\Psi^{\otimes n})\leq \mathsf{val}((\Psi')^{\otimes n})$. On the other direction, it is also quite easy to see that if $\mathsf{val}(\Psi)<1$, then $\mathsf{val}(\Psi')<1$. The main feature of this transformation is that it in fact achieves the following stronger property:
\begin{fact}\label{fact:BLR_trick_basic}
    Suppose that $\Psi$ satisfies $\mathsf{val}(\Psi)<1$. Then $\Psi'$ has no Abelian embedding.
\end{fact}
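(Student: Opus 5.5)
We argue by contraposition: assuming $\Psi'$ has an Abelian embedding, we build a perfect strategy for $\Psi$, so $\mathsf{val}(\Psi)=1$. Suppose there are unimodular $\sigma_i\colon \Sigma_i\times\mathbb{F}_p\to\mathbb{C}$ and $\sigma_{i'}\colon\Sigma_i\times\mathbb{F}_p\to\mathbb{C}$ such that for all $(x_1,\ldots,x_k)\in\mathsf{supp}(\mu)$ and all $\gamma,\alpha_1,\ldots,\alpha_k\in\mathbb{F}_p$,
\[
\prod_{i=1}^{k}\sigma_i(x_i,\alpha_i)\,\sigma_{i'}(x_i,\alpha_i+\gamma) = \omega_p^{\gamma t(x_1,\ldots,x_k)}.
\]
The support of the question distribution of $\Psi'$ is exactly the set of such tuples, since $\gamma$ and the $\alpha_i$ are uniform and independent.

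\textbf{Step 1: separate the variables.} Fix a support tuple $x$ and $\gamma$. In the identity, vary only $\alpha_1$ and keep the other $\alpha_j$ fixed. Then $\sigma_1(x_1,\alpha_1)\sigma_{1'}(x_1,\alpha_1+\gamma)$ must be independent of $\alpha_1$. Call this quantity $h_1(x_1,\gamma)$; it depends only on $x_1$ and $\gamma$, because $\sigma_1,\sigma_{1'}$ see only $x_1$. Define $h_i(x_i,\gamma)$ in the same way for every $i$. Strictly, this needs every $x_i$ to appear in some support tuple, which we may assume by discarding questions of probability zero. The identity then becomes $\prod_i h_i(x_i,\gamma)=\omega_p^{\gamma t(x)}$ on the support, for every $\gamma$.

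\textbf{Step 2: get multiplicativity in $\gamma$ (the BLR step).} For fixed $x_i$, write $u(\alpha)=\sigma_i(x_i,\alpha)$ and $v(\alpha)=\sigma_{i'}(x_i,\alpha)$. We have $u(\alpha)v(\alpha+\gamma)=h(\gamma)$ for all $\alpha,\gamma$.
\begin{itemize}[leftmargin=*]
\item Setting $\alpha=0$ gives $v(\gamma)=h(\gamma)/u(0)$.
\item Setting $\alpha+\gamma=0$ gives $u(\alpha)=h(-\alpha)/v(0)$.
\item Substituting both, $h(-\alpha)h(\alpha+\gamma)=c\,h(\gamma)$ for all $\alpha,\gamma$, where $c=u(0)v(0)=h(0)$.
\end{itemize}
Put $\chi=h/h(0)$ and $\beta=-\alpha$. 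This yields $\chi(\beta)\chi(\gamma-\beta)=\chi(\gamma)$, i.e. $\chi$ is a homomorphism $\mathbb{F}_p\to S^1$. Hence $\chi(\gamma)=\omega_p^{\gamma a_i(x_i)}$ for some $a_i(x_i)\in\mathbb{F}_p$, and so $h_i(x_i,\gamma)=h_i(x_i,0)\,\omega_p^{\gamma a_i(x_i)}$.

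\textbf{Step 3: conclude.} At $\gamma=0$ the product identity gives $\prod_i h_i(x_i,0)=1$ on the support. Taking $\gamma=1$ then yields
\[
\omega_p^{\sum_i a_i(x_i)}=\omega_p^{t(x)},
\]
that is, $\sum_i a_i(x_i)=t(x)$ in $\mathbb{F}_p$ for every support tuple. So $f_i=a_i$ is a perfect strategy for $\Psi$, giving $\mathsf{val}(\Psi)=1$, a contradiction.

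The step that needs care is Step 2. The key point is that the order-$p$ structure is forced: a unimodular homomorphism from $\mathbb{F}_p$ can only take values that are $p$-th roots of unity. The rest is bookkeeping, including confirming that the support of the $\Psi'$ question distribution really ranges over all $\alpha_i$ and $\gamma$ independently.
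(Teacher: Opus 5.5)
Your proof is correct and follows the same route as the paper's sketch. You show that, for each fixed $x_i$, the pair contribution $\sigma_i(x_i,\alpha)\sigma_{i'}(x_i,\alpha+\gamma)$ is independent of $\alpha$ and is a scalar multiple of a character $\gamma\mapsto\omega_p^{\gamma a_i(x_i)}$ (equivalently, that $\sigma_i,\sigma_{i'}$ are scalar multiples of conjugate characters), and you then read off the perfect strategy $\sum_i a_i(x_i)=t(x)$; your Steps 1 and 2 simply prove in detail the step that the paper only asserts.
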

\begin{proof}[Proof sketch]
Suppose $\Psi'$ had an Abelian embedding, and take functions $\sigma_{i},\sigma_{i'}$ such that
\[
\prod\limits_{i}\sigma_i(x_i,\alpha_i)\sigma_{i'}(x_i,\alpha_i+\gamma) = \omega_p^{\gamma t(x_1,\ldots,x_k)}.
\]
Fixing
$(x_1,\ldots,x_k)$, we see that the right hand side is a character $\gamma\rightarrow \omega_{p}^{c\gamma}$, and it does not depend on any of the $\alpha_i$. It follows that the functions $\sigma_{i}(x_i,\alpha), \sigma_{i'}(x_i,\alpha)$ must each be a constant multiple of a character, and the characters must be the conjugates of each other, say $\theta_{i,x_i}\omega_p^{-a_i(x_i)\cdot \alpha}$ and $\theta_{i',x_i}\omega_p^{a_{i}(x_i)\cdot \alpha}$. Plugging this into the embedding equation gives that the constants must multiply to $1$, and that $\omega_{p}^{\gamma(a_1(x_1)+\ldots+a_k(x_k))}=\omega_p^{\gamma t(x_1,\ldots,x_k)}$, and so $a_1(x_1)+\ldots+a_k(x_k) = t(x_1,\ldots,x_k)$ and thus $\mathsf{val}(\Psi)=1$.
\end{proof}
\vspace{-1ex}
\paragraph{Combining the BLR-trick and Cauchy-Schwarz Proofs:} Fact~\ref{fact:BLR_trick_basic} together with the Cauchy-Schwarz plan suggest a way to prove Theorem~\ref{thm:main}. Using it, we were able to produce a proof for the exponential decay of the $n$-fold repeated GHZ game, but not for general XOR games. In fact, we realized that the Cauchy-Schwarz plan above does not actually work, and there are games $\Psi$ that have no Abelian embedding and yet the direct-sum statement for them fails. Motivated by this fact we tried to understand what made the Cauchy-Schwarz proof work for the GHZ game after the BLR-trick, and it turns out that the BLR-trick satisfies a feature much stronger than Fact~\ref{fact:BLR_trick_basic} that we present next.

\subsubsection{Commuting Embeddings}
Below is a generalization of the notion of Abelian embeddings from Definition~\ref{def:Abelian_embed} from~\cite{BWHK23}, that turns out to be crucial in our setting (a related notion has been studied in~\cite{BeneWattsHelton2023}).
\begin{definition}\label{def:commuting_embedding}
    Let $\Psi$ be a $k$-XOR game. We say that $\Psi$ has a commuting embedding if there is a complex Hilbert space $V$ (not necessarily finite dimensional) and unitaries $U_{i}(x_i)$ over $V$ for all $i\in [k]$ and $x_i\in\Sigma_i$ such that
    \begin{enumerate}
        \item For all $i\neq j$, $x_i\in\Sigma_i$ and $x_j\in \Sigma_j$ it holds that $U_{i}(x_i)$ and $U_{j}(x_j)$ commute.
        \item There exists a non-zero vector $v\in V$ such that for all $(x_1,\ldots,x_k)\in\mathsf{supp}(\mu)$ it holds that
        \[
        U_1(x_1)\cdots U_k(x_k)v = \omega_p^{t(x_1,\ldots,x_k)}v.
        \]
    \end{enumerate}
\end{definition}
It is easy to note that Definition~\ref{def:Abelian_embed} is the same as Definition~\ref{def:commuting_embedding} in the case that $\mathsf{dim}(V)=1$, hence not having a commuting realization is a stronger property. It turns out that the BLR-trick achieves this stronger property, whose proof we defer to Section~\ref{sec:proofs}.
\begin{lemma}\label{lem:BLR_trick_strong}
    Suppose that $\Psi$ satisfies $\mathsf{val}(\Psi)<1$. Then $\Psi'$ has no commuting embedding.
\end{lemma}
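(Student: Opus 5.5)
We argue by contraposition: assuming $\Psi'$ has a commuting embedding $(V,\{U_i(x,\alpha)\},\{U_{i'}(x,\alpha)\},v)$, we will produce $a_i\colon\Sigma_i\to\mathbb{F}_p$ with $\sum_i a_i(x_i)=t(x_1,\ldots,x_k)$ on $\mathsf{supp}(\mu)$, so $\mathsf{val}(\Psi)=1$. First we would move to a finite-dimensional and symmetric setting. Let $V_0$ be the closed span of all words in the unitaries applied to $v$. Suppose a relation $Wv=\lambda v$ holds for a word $W$ built from operators of a set $S$ of players, and $Z$ is built from players outside $S$. Since $Z$ commutes with $W$, we get $WZv=\lambda Zv$. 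Hence every embedding identity propagates from $v$ to all vectors obtained by applying the other players' operators to $v$. This "transport of relations" will be used repeatedly. It substitutes for the fact that the identities in Definition~\ref{def:commuting_embedding} are only assumed on the single vector $v$.

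The heart of the argument exploits that the embedding equation holds for \emph{all} $\alpha_1,\ldots,\alpha_k,\gamma$ with $x$ fixed. Write the equation as
\[
\prod_i U_i(x_i,\alpha_i)U_{i'}(x_i,\alpha_i+\gamma)v=\omega_p^{\gamma t(x)}v .
\]
Comparing it for two values $\alpha_i,\tilde\alpha_i$ of a single coordinate, and using transport to cancel the remaining players' factors, gives
\[
U_i(x_i,\alpha_i)U_{i'}(x_i,\alpha_i+\gamma)w=U_i(x_i,\tilde\alpha_i)U_{i'}(x_i,\tilde\alpha_i+\gamma)w
\]
on the relevant vectors $w$. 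So the operator $T_{i,x_i}(\gamma):=U_i(x_i,\alpha)U_{i'}(x_i,\alpha+\gamma)$ is, on these vectors, independent of $\alpha$. Composing two such identities, with $i$ and $i'$ acting as commuting parties, should give $T_{i,x_i}(\gamma)T_{i,x_i}(\gamma')=T_{i,x_i}(\gamma+\gamma')$ and $T_{i,x_i}(p)=T_{i,x_i}(0)$. This is exactly the same computation as in the proof sketch of Fact~\ref{fact:BLR_trick_basic}, now done with operators: $\gamma\mapsto T_{i,x_i}(\gamma)$ becomes a representation of $\mathbb{Z}_p$. The embedding equation then reads
\[
\prod_i T_{i,x_i}(\gamma)v=\omega_p^{\gamma t(x)}v .
\]

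Next we would show that all $T_{i,x}(1)$ with $i$ fixed and $x$ varying commute with one another on $V_0$. Here the BLR randomization is essential. In $T_{i,x}(1)T_{i,x'}(1)$ we may choose the internal $\alpha$'s freely. Using $\alpha$-independence, we would rewrite one factor so that its $U_i$-part and the other factor's $U_{i'}$-part sit in positions where the cross-player commutation of $U_i$ with $U_{i'}$ lets them be swapped. Once this is done, the family $\{T_{i,x}(1)\}_{i,x}$ generates a commutative algebra of unitaries of order dividing $p$. We would take a joint eigenvector $u\in V_0$ with $\langle u,v\rangle\neq 0$, which exists by decomposing $v$ into joint eigenspaces. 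Its eigenvalues are $\omega_p^{a_i(x)}$ with $a_i(x)\in\mathbb{F}_p$. Pairing the embedding identity with $u$ yields $\omega_p^{\sum_i a_i(x_i)}=\omega_p^{t(x)}$ on $\mathsf{supp}(\mu)$, which is the desired classical solution.

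We expect the commutation step for operators of the same player with different questions to be the main obstacle. It is precisely where Mermin–Peres-type games escape, so it must use the shifts $\alpha$ in an essential way. If the direct swapping argument is too delicate, the fallback is a linear-algebra certificate. Since $\mathsf{val}(\Psi)<1$, the $\mathbb{F}_p$-system $\sum_i a_i(x_i)=t(x)$ is unsolvable, so there are coefficients $c_x\in\mathbb{F}_p$ with $\sum_{x:x_i=y}c_x=0$ for every $i,y$ but $\sum_x c_xt(x)\neq 0$. We would then chain the embedding equations for each $x$, using $\gamma=c_x$, and choose the $\alpha$'s of consecutive equations so that each $U_{i'}(y,\cdot)$ is immediately followed by a matching $U_i(y,\cdot)$ whose product telescopes. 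The chain would collapse to $v=\omega_p^{\sum_x c_xt(x)}v$, a contradiction.
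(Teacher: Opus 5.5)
\textbf{There is a genuine gap.} It lies in the two steps you expected to follow by manipulation: the representation property and same-player commutation.

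Your relation $\prod_i T_{i,x_i}(\gamma)v=\omega_p^{\gamma t(x)}v$ involves all $2k$ players, so transport yields nothing beyond $v$ itself. The $\alpha$-independence of $T_{i,x}$ is available only on vectors $Zv$ with $Z$ a word in players other than $i,i'$. That does not let you compose two $T$'s, since the outer factor acts on $T_{i,x}v$, which is not such a vector. Nor does it let you swap same-player factors.

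Both intermediate claims are in fact false as stated:
\begin{itemize}
\item \emph{Representation property.} Without normalization, a one-dimensional embedding built from a classical solution $\sigma$ of $\Psi$ has $T_{i,x}(\gamma)=\theta_{i,x}\omega_p^{\sigma_i(x)\gamma}$, where only $\prod_i\theta_{i,x_i}=1$ is forced. So $T_{i,x}(0)\neq 1$ in general, and $\gamma\mapsto T_{i,x}(\gamma)$ is not a representation.
\item \emph{Commutation on $V_0$.} Take $k=2$, $t\equiv 0$, $\Sigma_1=\{a,b\}$, $\Sigma_2=\{z\}$, $V=\mathbb{C}^2\otimes\mathbb{C}^2$, $v=e_1\otimes e_1+e_2\otimes e_2$. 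Set $U_1(x,\alpha)=A_x\otimes I$, $U_{1'}(x,\alpha)=I\otimes\overline{A_x}$, $U_2=U_{2'}=I$, with $A_a=\bigl(\begin{smallmatrix}0&1\\1&0\end{smallmatrix}\bigr)$ and $A_b=\tfrac{1}{\sqrt2}\bigl(\begin{smallmatrix}1&1\\1&-1\end{smallmatrix}\bigr)$. Since $(A\otimes\overline{A})v=v$ for every unitary $A$, this is a commuting embedding of $\Psi'$ with $V_0=V$. Yet $T_{1,a}(1)=A_a\otimes A_a$ and $T_{1,b}(1)=A_b\otimes A_b$ do not commute. Nothing here depends on $\alpha$, so no choice of shifts can give commutation on $V_0$.
\end{itemize}
Commutation holds only on a smaller subspace, which your plan never identifies.

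\textbf{The missing idea} is to eliminate the unprimed players first and then use them as a mirror.
\begin{itemize}
\item Multiply the constraint at $(x,\alpha,\gamma)$ by the adjoint of the one at $(x,\alpha,0)$. This cancels each $U_i(x_i,\alpha_i)$ exactly. With $E_i(x,\alpha)=U_{i'}(x,\alpha)U_{i'}(x,0)^*$, it leaves $\prod_i E_i(x_i,\alpha_i+\gamma)E_i(x_i,\alpha_i)^*v=\omega_p^{\gamma t(x)}v$, a relation among primed players only.
\item The $\gamma=0$ constraints also give the mirror identity $D_j(x,\alpha)v=E_j(x,\alpha)^*v$, where $D_j(x,\alpha)=U_j(x,\alpha)U_j(x,0)^*$.
\item Now your transport principle does work. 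The primed relation holds on the closed span $W$ of $D$-words applied to $v$. The mirror identity shows $W$ is also the span of $E$-words applied to $v$, hence $W$ is $E$-invariant.
\end{itemize}
So the relation is an operator identity on $W$. Varying a single $\alpha_j$ then gives the representation property. Same-player commutation is immediate, because $E_i(x_i,\gamma)$ equals a scalar times a product of other players' operators. The $E$'s therefore generate a finite abelian group, $W$ is finite-dimensional, and your joint-eigenvector step finishes the proof.

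\textbf{Your fallback does not bypass this.} $U_{i'}(y,\cdot)$ and $U_i(y,\cdot)$ are unitaries of different players and satisfy no relation beyond commuting, so nothing telescopes between them. If you read it instead as telescoping the $T_{i,y}$'s, regrouping the chain by question $y$ requires exactly the same-player commutation and representation property you have not proved. What the chain would have to be is an ordered contradiction for $\Psi'$, with each of the $2k$ players' words reducing separately. A linear certificate for $\Psi$ does not supply one: the GHZ game has such a certificate but also a commuting embedding. So the $\alpha$-shifts would have to enter in an essential way that the proposal does not specify.
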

Lemma~\ref{lem:BLR_trick_strong} gives new hope into the plan outlined in Section~\ref{sec:approach}: it now suffices to prove that any game that does not have any commuting embedding has a structured contradiction that we can implement via a Cauchy-Schwarz proof (and which we can then tensorize). This turns out to be true, and is the content of~\cite[Theorems 5.10, 5.12]{BWHK23} (earlier established by~\cite[Theorem 6.1]{BWHKN19} for games over $\mathbb{F}_2$), and combining that together with a simple tensorization bound yields Theorem~\ref{thm:main}.

\paragraph{Statement of AI use:} The manuscript is part of an ongoing effort to obtain general parallel repetition theorems using tensorizations of Cauchy-Schwarz. We used ChatGPT 5.5, 5.6 and 6 to test our earlier specialized Cauchy-Schwarz attempt, getting direct proofs for the Levi-Cevita game and the GHZ game (after the BLR trick).

ChatGPT 6 helped us find the relevant quantum literature, which directed our search towards the present Cauchy-Schwarz approach. We instructed ChatGPT 6 to find a Cauchy-Schwarz proof for the GHZ game after the BLR trick has been applied to it, for which it was successful. In an attempt to try to see why that succeeded, we asked the model to prove that if a game has quantum value less than $1$ then parallel repetition holds. The model was able to solve this problem by using results from~\cite{BWHK23}, which led us to Lemma~\ref{lem:BLR_trick_strong}.

\section{On the BLR-trick,  Commuting Embeddings and
Ordered Contradictions}\label{sec:proofs}
In this section we prove Lemma~\ref{lem:BLR_trick_strong} and then show the relation between commuting embeddings and a form of structured constraints called ordered contradictions.
\subsection{Proof of Lemma~\ref{lem:BLR_trick_strong}}
Suppose towards contradiction that $U_{1},U_{1'},\ldots,U_{k},U_{k'}$ are a commuting embedding of $\Psi'$ over the vector space $V$, and fix a non-zero $v\in V$ as in Definition~\ref{def:commuting_embedding} so that for all $(x_1,\ldots,x_k)\in\mathsf{supp}(\mu)$, $\alpha_1,\ldots,\alpha_k,\gamma\in\mathbb{F}_p$ it holds that
\begin{equation}\label{eq:BLR_trick}
\prod\limits_{i=1}^{k}U_{i'}(x_i,\alpha_i+\gamma)U_i(x_i,\alpha_i) v = \omega_p^{\gamma t(x_1,\ldots,x_k)} v.
\end{equation}
\paragraph{Step 1: symmetrizing player $i$ and $i'$.} The goal of the first step is to show that the operator of player $i'$ should (morally) be the conjugate transpose of the operator of player $i$. Towards this end define $D_{i}(x,\alpha) =
U_{i}(x,\alpha)U_{i}(x,0)^{*}$,
$E_i(x,\alpha) = U_{i'}(x,\alpha)U_{i'}(x,0)^{*}$.
Fixing $(x_1,\ldots,x_k)$, and plugging $\gamma=0$ into~\eqref{eq:BLR_trick} we get
$\prod\limits_{i=1}^{k}U_{i'}(x_i,\alpha_i)U_i(x_i,\alpha_i) v = v$, which has the following implications:
\begin{enumerate}
    \item We have that $\prod\limits_{i=1}^{k}E_i(x_i,\alpha_i)D_i(x_i,\alpha_i) v = v$, so for each $j$, picking $\alpha_i=0$ for all $i\neq j$ gives that $E_j(x_j,\alpha_j)D_j(x_j,\alpha_j) v= v$. By commutation we get that
    $D_j(x_j,\alpha_j)v=E_j(x_j,\alpha_j)^{*}v$.
    \item Using this together with~\eqref{eq:BLR_trick} gives
    \[
\prod\limits_{i=1}^{k}U_{i'}(x_i,\alpha_i+\gamma)U_i(x_i,\alpha_i)
\left(\prod\limits_{i=1}^{k}U_{i'}(x_i,\alpha_i)U_i(x_i,\alpha_i)\right)^{*}v
=\omega_p^{\gamma t(x_1,\ldots,x_k)} v,
\]
and expanding the left hand side and using the commutation property gives that
\[
\prod\limits_{i=1}^{k}E_i(x_i,\gamma+\alpha_i)E_i(x_i,\alpha_i)^{*}v=\omega_p^{\gamma t(x_1,\ldots,x_k)} v.
\]
\end{enumerate}

\paragraph{Step 2: arguing that $E_i$ are representations.}
Define $W$ to be the closure of
\[
\mathsf{span}\left\{D_{i_1}(x_{1},\alpha_{1})^{n_1}\cdots D_{i_m}(x_{m},\alpha_{m})^{n_m} v~|~m\geq 0, ~\forall j\in[m] ~ x_j\in \Sigma_{i_j}, \alpha_j\in\mathbb{F}_p, n_j\in \{-1,1\}, i_j\in [k]\right\}.
\]
By the commutation property we get that $W$ is an invariant space for each $D_i$. Next, using commutativity again and $D_j(x_j,\alpha_j)v = E_j(x_j,\alpha_j)^{*}v$, it follows that
\[
D_{i_1}(x_{1},\alpha_{1})^{n_1}\cdots D_{i_m}(x_{m},\alpha_{m})^{n_m} v
=E_{i_m}(x_m,\alpha_m)^{-n_m}\cdots E_{i_1}(x_1,\alpha_1)^{-n_1}v,
\]
so the space $W$ is the same if we replaced $D$'s with $E$'s. Thus, $W$ is also an invariant space for each $E_i$. Lastly, it follows that for all $w\in W$,
\[
\prod\limits_{i=1}^{k}E_i(x_i,\gamma+\alpha_i)E_i(x_i,\alpha_i)^{*} w = \omega_p^{\gamma t(x_1,\ldots,x_k)}w.
\]

We thus restrict all of the $D$ and $E$ operators to $W$, and get that
\begin{equation}\label{eq:BLR_trick_2}
\prod\limits_{i=1}^{k}E_i(x_i,\gamma+\alpha_i)E_i(x_i,\alpha_i)^{*} = \omega_{p}^{\gamma t(x_1,\ldots,x_k)} I_W.
\end{equation}
Fixing $\gamma$ and  $\alpha_1,\ldots,\alpha_k,\alpha_1',\ldots,\alpha_k'\in\mathbb{F}_p$ that only disagree on coordinate $j$
we get
\[
E_j(x_j,\gamma+\alpha_j)
E_j(x_j,\alpha_j)^{*}
=
E_j(x_j,\gamma+\alpha_j')
E_j(x_j,\alpha_j')^{*}.
\]
Picking $\alpha_j'=0$ using $E_j(x_j,0)=I_W$ and re-arranging gives that
$E_j(x_j,\gamma+\alpha_j)
 = E_j(x_j,\gamma)E_j(x_j,\alpha_j)$, so for each $x_j\in\Sigma_j$, the pair $(\alpha\rightarrow E_j(x_j,\alpha),W)$ is a representation of $(\mathbb{F}_p,+)$. In particular, we get that $E_{j}(x_j,\alpha)^{p} = I_W$ for each $j$, $x_j$ and $\alpha$.

\paragraph{Step 3: getting a classical strategy for $\Psi$:} we note that for $i\neq i'$, the operators $E_{i}(x_i,\alpha_i)$, $E_{i'}(x_{i'},\alpha_{i'})$ commute with each other by assumption on the $U$. Plugging $\alpha_i=0$ to~\eqref{eq:BLR_trick_2}, we see that
$E_i(x_i,\alpha_i)$ can be expressed as a product of $E_{i'}^{*}$ for $i'\neq i$, so $E_i(x_i,\alpha_i)$ and $E_i(x_i',\alpha_i')$ also commute with each other. Thus, the operators $E$ all commute with each other, and $E^{p} = I_W$ for each one of them, and it follows that  $\mathsf{dim}(W)<\infty$. Thus, we think of $E_{j}(x_j,\alpha)$ as finite matrices henceforth.

It follows that all of the $E_i$ matrices are simultaneously diagonalizable, and we may find a common eigenvector $w\in W$.
It follows that $(\alpha\rightarrow E_j(x_j,\alpha),\mathsf{span}(w))$ is an irreducible representation, and since $(\mathbb{F}_p,+)$ is Abelian we conclude that
$E_j(x_j,\alpha) w = \chi_{j, x_j}(\alpha)w$ where $\chi_{j,x_j}\colon \mathbb{F}_p\to\mathbb{C}$ is a character. Writing $\chi_{j,x_j}(z) = \omega_p^{\sigma_j(x_j)z}$ for some $\sigma_j\colon \Sigma_j\to \mathbb{F}_p$ and plugging into~\eqref{eq:BLR_trick_2} with $\alpha_i=0$ gives
\[
\prod\limits_{i=1}^{k}\omega_p^{\sigma_i(x_i)\gamma} w= \omega_p^{\gamma t(x_1,\ldots,x_k)} w,
\]
so $\sum\limits_{i=1}^{k}\sigma_i(x_i) = t(x_1,\ldots,x_k)$, concluding the proof.

\subsection{From No Commuting Embeddings to Ordered Contradictions}
In this section we prove that if a $k$-XOR game has no commuting embeddings, then it admits a structured form of contradictions, which we call an \emph{ordered contradiction}, which informally is a collection of constraints and signing of them that lead to a combinatorial contradiction.
To define it formally we need to extend the class of $k$-XOR games a little bit to allow having constraints which have the same left hand side but different right hand side:
\begin{definition}
    A generalized $k$-XOR game is defined by sets of questions $\Sigma_1,\ldots,\Sigma_k$ and a distribution $\mu$ over $\Sigma_1\times\ldots\times\Sigma_k\times\mathbb{F}_p$. The verifier samples $(x_1,\ldots,x_k,t)\sim \mu$, sends $x_i$ to player $i$, receives answer $a_i\in\mathbb{F}_p$ and accepts if and only if $a_1+\ldots+a_k = t$.

    We also allow $k=0$, in which case the verifier
simply samples $t\in\mathbb{F}_p$ and accepts
if and only if $t=0$.
\end{definition}
We switch between the $t\in\mathbb{F}_p$ of right hand side of the challenge and its complex analog $\lambda=\omega_p^{t}$, in which case the constraint becomes $\omega_p^{a_1}\cdots\omega_p^{a_k} = \lambda$.
\begin{definition}\label{def:ordered_contradiction}
Let $\Psi$ be a generalized $k$-XOR game. An ordered contradiction of length $N$ for $\Psi$ is a sequence of challenges $(e_1,t_1),\ldots,(e_N,t_N)\in \mathsf{supp}(\mu)$ and signs $s_1,\ldots,s_N\in \{-1,1\}$ such that:
\begin{enumerate}
    \item The right hand side does not trivialize, in the sense that $\prod\limits_{j=1}^{N}{\omega_p}^{s_jt_j}\neq 1$.
    \item The left hand side trivializes combinatorially, in the sense that for each player $i$, letting $v_{i,1},\ldots,v_{i,N}$ be the challenges of player $i$ in $e_1,\ldots,e_N$ respectively, the formal expression $\prod\limits_{j=1}^{N} v_{i,j}^{s_j}$ reduces to the empty word by canceling any symbol $\sigma$ and any $\sigma^{-1}$ adjacent to it.
\end{enumerate}
\end{definition}

The main result of this section is the following lemma, which follows from~\cite[Theorem 5.10]{BWHK23}. We give a direct proof, generalizing the argument from~\cite[Theorem 2.1]{BeneWattsHelton2023} for $\mathbb{F}_2$, for the sake of completeness.
\begin{lemma}\label{lem:nocommute_orderedcontra}
    A generalized $k$-XOR $\Psi$ has no commuting embeddings if and only if it has an ordered contradiction.
\end{lemma}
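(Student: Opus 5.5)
We prove the two directions separately. The easy direction is that an ordered contradiction rules out a commuting embedding. For the converse we build a group from the constraints, use a Cayley-graph (free-group) construction to encode the absence of an ordered contradiction, and take the regular representation on $\ell^2$ of that group as the embedding.

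\textbf{Ordered contradiction $\Rightarrow$ no commuting embedding.} Suppose $U_i(x_i)$, $v$ form a commuting embedding and $(e_j,t_j),s_j$ is an ordered contradiction. For a challenge $e=(x_1,\ldots,x_k)$ write $U(e)=U_1(x_1)\cdots U_k(x_k)$. Then $U(e)v=\omega_p^{t}v$, so also $U(e)^{-1}v=\omega_p^{-t}v$ since $U(e)$ is unitary.

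We would like to apply $U(e_N)^{s_N}\cdots U(e_1)^{s_1}$ to $v$ and peel off one scalar at a time. The problem is that $v$ is only an eigenvector of each single $U(e_j)$, not of the intermediate vectors. The fix is to use commutation across players: the operators $U_i(\cdot)$ for different $i$ commute, so we can regroup by player.

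Step 1 is a word identity. The product $\prod_j U(e_j)^{s_j}$, taken in order, equals $\prod_i \bigl(\prod_j U_i(v_{i,j})^{s_j}\bigr)$. This holds because, when $U(e)^{-1}=U_k^{-1}\cdots U_1^{-1}$ is expanded, factors of different players commute. By condition 2 each per-player word cancels freely, so the whole operator is $I$.

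Step 2 evaluates the same operator on $v$ by induction on $j$, which needs some care. Write $W_j=U(e_j)^{s_j}\cdots U(e_1)^{s_1}$. Using $U(e)v=\omega_p^{t}v$ only on $v$ itself, we instead evaluate the reversed product $U(e_1)^{s_1}\cdots U(e_N)^{s_N}v$: the innermost factor acts on $v$ first and returns a scalar times $v$, and so on outward. This gives $\prod_j\omega_p^{s_jt_j}v$. The reversed order is still per-player freely reducible, since the reverse of a freely reducible word is freely reducible, so by Step 1 the operator is $I$. Hence $v=\prod_j\omega_p^{s_jt_j}v$ with the product $\neq1$, forcing $v=0$, a contradiction.

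\textbf{No ordered contradiction $\Rightarrow$ commuting embedding.} Define the group
\[
G=\Bigl(\ast_{i=1}^k F(\Sigma_i)\Bigr)\big/\bigl\langle [F(\Sigma_i),F(\Sigma_j)]\ (i\ne j)\bigr\rangle \times \mathbb{Z}_p,
\]
the graph product (direct product) of the free groups $F(\Sigma_i)$, times a central $\mathbb{Z}_p$ generated by $J$. Quotient by the normal subgroup $N$ generated by the elements $r_{(e,t)}=x_1\cdots x_k J^{-t}$ for $(e,t)\in\mathsf{supp}(\mu)$, and set $H=G/N$. The key claim is that $J\notin N$.

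To prove the claim, first note that every element of $N$ is a product of conjugates $\prod_j g_j r_{(e_j,t_j)}^{s_j} g_j^{-1}$. Since $G'=\prod_i F(\Sigma_i)$ is a direct product, we can project to each factor. In $F(\Sigma_i)$ the conjugating elements are $g_{j,i}$, and the condition that the product projects to the identity in every factor is exactly a per-player free cancellation, but with conjugators in place of adjacency. Taking the conjugators $g_j$ to be suitable prefixes reduces such a relation to one where consecutive words cancel in order. This is the van Kampen / Cayley-graph argument of \cite{BeneWattsHelton2023}, which we generalize from $\mathbb{F}_2$ to $\mathbb{Z}_p$ by replacing the sign with $J^{t}$. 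Consequently, $J^{c}\in N$ with $c\neq0$ would produce an ordered contradiction with $\sum_j s_jt_j=c$.

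Given the claim, let $V=\ell^2(H)$, take $U_i(x)$ to be left multiplication by $x$, and let $Z$ be left multiplication by $J$. Because $J$ is central of order exactly $p$ in $H$, we can project $V$ onto the $\omega_p$-eigenspace of $Z$; this eigenspace is nonzero since $J$ is nontrivial of prime order. The vector $v=\sum_{m}\omega_p^{-m}\delta_{J^m}$ then satisfies $U(e)v=\omega_p^{t}v$ on the support, using right-invariance. The $U_i$ for different players commute because $G$ was built with those commutators imposed, so this is a commuting embedding.

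\textbf{Main obstacle.} The hardest step is showing that relations in $N$ with arbitrary conjugators can be normalized into the ordered, adjacently-cancelling form of Definition~\ref{def:ordered_contradiction}. To handle it, we would represent each relation by a van Kampen diagram in the product of free groups. We would then argue by induction on the length of the conjugators, inserting the pair $g g^{-1}$ as additional $\pm$ copies of constraints. This insertion is only possible if the letters of $g$ themselves come from challenges, so the structure of the generators must be used here.
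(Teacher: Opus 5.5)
Your first direction is correct. The reversal detour is unnecessary, since every intermediate vector is already a scalar multiple of $v$, but it does no harm.

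\textbf{The gap is the converse.} Its key claim is that $J\notin N$, where $N$ is the \emph{normal} closure of the $r_{(e,t)}$. This is not merely unproved: it is false in general, and no choice of ``suitable prefixes'' can repair it.

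Because $G$ is a direct product, each conjugator $g_j$ has independent components $g_{j,i}\in F(\Sigma_i)$. In a free group, any sequence of letters in which every letter has total exponent $0$ can be made into a trivial product of suitable conjugates of those letters, taken in the given order. For example, $a\cdot b\cdot(b^{-1}a^{-1}b)\cdot b^{-1}=1$, although $aba^{-1}b^{-1}\neq 1$. In general, remove an opposite pair at positions $j<j'$ and handle the rest by induction. Then take $c_j=\ell_j$ and $c_{j'}=u^{-1}c_j^{-1}u$, where $u$ is the product of the conjugates strictly between them.

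Hence $J^c\in N$ as soon as some integer combination of constraints cancels every question of every player exactly, with $-\sum_j s_jt_j=c\neq 0$. Equivalently, $J\neq 1$ in $H$ exactly when $\Psi$ has an Abelian embedding in the sense of Definition~\ref{def:Abelian_embed}; use a character of $H^{\mathrm{ab}}$. So your claim amounts to ``commuting embedding $\Rightarrow$ Abelian embedding.'' The paper's framework rules this out. In Section~\ref{sec:approach} it says there are games with no Abelian embedding for which the direct-sum bound fails. By Lemma~\ref{lem:tensorize_CS} such games have no ordered contradiction, and yet $J\in N$ for them.

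The underlying problem is that $\ell^2(H)$ forces $U(e)=\omega_p^{t}$ as an operator identity on a whole eigenspace. A commuting embedding only needs this on the single vector $v$. Passing to the normal closure forgets exactly the ordering information that Definition~\ref{def:ordered_contradiction} keeps.

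\textbf{The fix is not to normalize.} Let $K\le G$ be the ordinary subgroup generated by the $r_{(e,t)}$. Let $G$ act on $\ell^2(G/K)$ by left multiplication, and take $v=\sum_{a}\omega_p^{-a}\delta_{J^aK}$.

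Every element of $K$ is literally a product $\prod_j r_{(e_j,t_j)}^{s_j}$. Its $F(\Sigma_i)$-component is the word $\prod_j x_i(j)^{s_j}$, and its $\mathbb{Z}_p$-component is $J^{-\sum_j s_jt_j}$. So $J^c\in K$ with $c\neq 0$ is verbatim an ordered contradiction, and no van Kampen argument is needed.

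For a constraint $(e,t)$ we have $x_1\cdots x_kJ^aK=J^{a+t}r_{(e,t)}K=J^{a+t}K$, which gives $U(e)v=\omega_p^{t}v$. Moreover $v\neq 0$ because, with no ordered contradiction, the cosets $J^aK$ are distinct. This is exactly the paper's proof: its equivalence classes of expressions are the left cosets of this (non-normal) subgroup.
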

\begin{proof}
    Suppose $\Psi$ has an ordered contradiction $(e_1,t_1),\ldots,(e_N,t_N)$,
    $(s_1,\ldots,s_N)$, and let $\{U_i(x)\}_{i\in[k],x\in\Sigma_i}$ be a commuting embedding with the vector $v\neq 0$.
    Writing $e_j = (x_1(j),\ldots,x_k(j))$, we have
    \[
    \prod\limits_{j=1}^{N}\prod\limits_{i=1}^{k}
    U_{i}(x_i(j))^{s_j} v
    =
    \prod\limits_{j=1}^{N}{\omega_p}^{s_jt_i}v\neq v.
    \]
    On the other hand, by commutation
    \[
    \prod\limits_{j=1}^{N}\prod\limits_{i=1}^{k}
    U_{i}(x_i(j))^{s_j} v
    =
    \prod\limits_{i=1}^{k}
    \prod\limits_{j=1}^{N}
    U_{i}(x_i(j))^{s_j} v
    =\prod\limits_{i=1}^{k} I v=v,
    \]
    so $\Psi$ has no commuting embedding.

    We now prove the other direction, and suppose there is no ordered contradiction. Define an expression $(w_1,\ldots,w_k,t)$ to be a tuple where $t\in \mathbb{F}_p$, and each $w_i$ is a reduced word over the alphabet $\Sigma_i$ (meaning, we allow inverses but cancel a symbol and its inverse if they are next to each other). We define the relation
    $\sim$ between expressions as follows: for each challenge $(x_1,\ldots,x_k,t)$ and $s\in\{-1,1\}$ we say
    \[
    (w_1,\ldots,w_k,z)\sim (w_1x_1^{-s},\ldots,w_kx_k^{-s},z-st),
    \]
    and then complete it to an equivalence relation. In this language, the fact that $\Psi$ has no ordered contradictions can be stated as asserting that the expressions $(\emptyset,\ldots,\emptyset,t)$ for $t=0,\ldots,p-1$ lie in different equivalence classes, and we denote the set of equivalency classes by $\mathcal{C}$.

    Consider the Hilbert space $V = \ell_2(\mathcal{C})$, and define the operators $U_i(x)\colon V\to V$ for each $i=1,\ldots,k$ and $x\in\Sigma_i$ by
    \[
     (U_i(x) v)_{[(w_1,\ldots,w_{i-1},x w_{i},w_{i+1},\ldots,w_k,t)]}
     =
     v_{[(w_1,\ldots,w_{i-1}, w_{i},w_{i+1},\ldots,w_k,t)]}
    \]
    We note that $U_{i}(x)$ acts bijectively on equivalence classes, and its inverse is the result of replacing $x$ by $x^{-1}$, so it is unitary. Let $v_a \in V$ be the elementary basis vector corresponding to the coordinate $(\emptyset,\ldots,\emptyset,a)$, and let $v = \sum\limits_{a\in\mathbb{F}_p}\omega_p^{a}v_a$. Since $(\emptyset,\ldots,\emptyset,a)$ all lie in different equivalence classes, the vectors $v_a$ are all linearly independent, and so $v\neq 0$. Lastly, for any challenge $(x_1,\ldots,x_k,t)$ we have
    \[
    \prod\limits_{i=1}^{k}U_i(x_i) v
    =\sum\limits_{a\in\mathbb{F}_p}\omega_p^{a}e_{[(x_1,\ldots,x_k,a)]}
    =
    \sum\limits_{a\in\mathbb{F}_p}\omega_p^{a}e_{[(\emptyset,\ldots,\emptyset,a-t)]}
    =\omega_p^{t} v,
    \]
    giving that they form a commuting embedding.
\end{proof}

\section{From Ordered Contradictions to Cauchy-Schwarz Proofs}\label{sec:gluing}
In this section we show how to convert an ordered contradiction as in Definition~\ref{def:ordered_contradiction} to a direct proof that the value of a game $\Psi$ is less than $1$ that only uses the Cauchy-Schwarz inequality.
\subsection{The Gluing Operation}
Let $\Psi$ be a generalized $k$-XOR game, and let $I\subseteq [k]$ be a non-empty set of players. We define the game $\Psi_{\text{glue}(I)}$, which has $2(k-\card{I})$ players, as follows:
\begin{enumerate}
    \item Sample a challenge $(x_1,\ldots,x_k,t)\sim \mu$ from $\Psi$.
    \item Sample a challenge $(x_1',\ldots,x_k',t')\sim \mu$ conditioned on $x_{i}'=x_i$ for all $i\in I$.
    \item For $i\not\in I$, send challenge $x_i$ to player $i$ and challenge $x_{i}'$ to player $i'$. Upon receiving their responses $a_{i}$ and $a_{i}'$, the verifier accepts if and only if
    \[
    \sum\limits_{i\not\in I} a_i+a_{i}' = t-t'.
    \]
\end{enumerate}
We also refer to this operation as gluing over $I$. We have the following lemma. It is easy to see that if $\mathsf{val}(\Psi)=1$, then $\mathsf{val}(\Psi_{\text{glue}(I)})=1$. The following claim shows that if $\mathsf{val}(\Psi_{\text{glue}(I)})<1$, then $\mathsf{val}(\Psi)<1$; in fact, importantly it shows that this follows by one simple application of the Cauchy-Schwarz inequality.
\begin{lemma}\label{lem:CS_once}
    Let $\Psi$ be a generalized $k$-XOR game, let $I\subseteq [k]$ be non-empty and denote by $\mu_{\Psi}$, $\mu_{\Psi_{\text{glue}(I)}}$ their distribution over challenges. Then for all $a\in\mathbb{F}_p$ and $f_1,\ldots,f_k$ with absolute value at most $1$,
    \[
    \card{\Expect{(x,t)\sim\mu_{\Psi}}{\prod\limits_{i=1}^{k}f_i(x_i)\omega_p^{-at}}}^2
    \leq
    \card{\Expect{(y,u)\sim\mu_{\Psi_{\text{glue}(I)}}}{\prod\limits_{i\not\in I}{f_{i}(y_{i})\overline{f_{i}(y_{i'})}\omega_p^{-au}}}}.
    \]
\end{lemma}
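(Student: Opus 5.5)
The plan is a single application of Cauchy--Schwarz after conditioning on the glued coordinates $x_I=(x_i)_{i\in I}$. Write the left-hand expectation as an iterated expectation. First sample $x_I$ according to the marginal of $\mu_\Psi$ on the coordinates in $I$. Then sample the remaining coordinates $(x_{[k]\setminus I},t)$ from the conditional distribution given $x_I$. The functions $f_i$ with $i\in I$ depend only on $x_I$, so they factor out of the inner expectation:
\[
\Expect{(x,t)\sim\mu_{\Psi}}{\prod_{i=1}^{k}f_i(x_i)\omega_p^{-at}}
=\Expect{x_I}{\prod_{i\in I}f_i(x_i)\cdot G(x_I)},\qquad
G(x_I)=\Expect{(x,t)\mid x_I}{\prod_{i\notin I}f_i(x_i)\omega_p^{-at}}.
\]

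Next, apply Cauchy--Schwarz (or simply Jensen) over $x_I$. Since $\card{f_i}\leq 1$ for every $i\in I$, this gives
\[
\card{\Expect{x_I}{\prod_{i\in I}f_i(x_i)G(x_I)}}^2\leq \Expect{x_I}{\card{G(x_I)}^2}.
\]
Now expand $\card{G(x_I)}^2=G(x_I)\overline{G(x_I)}$ as an expectation over two independent copies $(x,t)$ and $(x',t')$, each drawn from the conditional distribution given $x_I$. The conjugation sends $\omega_p^{-at'}$ to $\omega_p^{at'}$ and $f_i(x_i')$ to $\overline{f_i(x_i')}$. The product therefore becomes $\prod_{i\notin I}f_i(x_i)\overline{f_i(x_i')}\,\omega_p^{-a(t-t')}$.

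The joint law of $(x_I,x_{[k]\setminus I},t,x'_{[k]\setminus I},t')$ obtained this way is exactly the law used in $\Psi_{\text{glue}(I)}$. Indeed, sampling $(x,t)\sim\mu_\Psi$ and then $(x',t')\sim\mu_\Psi$ conditioned on $x'_I=x_I$ is the same as sampling $x_I$ from its marginal and then drawing two conditionally independent completions. Setting $y_i=x_i$, $y_{i'}=x'_i$ and $u=t-t'$, we conclude that $\Expect{x_I}{\card{G(x_I)}^2}$ equals the expectation on the right-hand side of the lemma.

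This quantity is a nonnegative real, so it equals its own absolute value, which gives the stated inequality. There is no real obstacle here. The only point requiring care is to check that the gluing distribution is exactly this conditional-product coupling; that check is immediate from the definition of $\Psi_{\text{glue}(I)}$.
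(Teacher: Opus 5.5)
Your proposal is correct and is essentially the argument the paper intends when it calls the lemma ``immediate by the Cauchy--Schwarz inequality.'' You condition on $x_I$, bound the square by $\mathbb{E}_{x_I}\card{G(x_I)}^2$, and identify that quantity with the glued expectation, where $u=t-t'$ and the two completions are conditionally independent given $x_I$.
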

\begin{proof}
    Immediate by the Cauchy-Schwarz inequality.
\end{proof}
\subsection{Realizing Ordered Contradictions via Gluing}\label{sec:ordcont}
In this section we show that an iterative application of gluing operations can convert an ordered contradiction for a generalized $k$-XOR game $\Psi$ to a Cauchy-Schwarz based proof of $\mathsf{val}(\Psi)<1$. Towards this end, we introduce the useful notion of reading sequences.
Fix a generalized XOR game $\Psi$ that has ordered contradiction consisting of challenges $e_1,\ldots,e_N$ and signs $s_1,\ldots,s_N$. Reading the word player $i$ gets from left to right one symbol at a time, if the newly read symbol is the inverse of the right-most symbol we read (in the sense that they are the same symbol but the corresponding signs are opposite), we cancel both the right-most symbol and the newly read symbol. Denote by $h_i(j)$ the number of symbols remaining after reading the first $j$ symbols. For example, if the $j$th read letter was $\sigma$ with $s_j=1$ and the $(j+1)$-st letter was $\sigma$ with $s_{j+1}=-1$, then $h_{i}(j+1) = h_{i}(j)-1$. Define the sequence
\[
h_i=(h_i(0),h_i(1),\ldots,h_i(N)).
\]
\begin{definition}
Let $\Psi$ be a generalized XOR game that has an ordered contradiction of length $N$, and let $\mathcal{R}_N$ be the collection of all of these ordered contradictions. We define $H_N(\Psi)=\min_{\mathcal R\in\mathcal{R}_N}\max_i h_i$,
where the maximum is with respect to the lexicographic order. If $\Psi$ has no players, we define $H_N(\Psi)=(0,\ldots,0)$
\end{definition}

We now show that if $\Psi$ is a generalized XOR game that has an ordered contradiction of length $N$, then one can choose a set of players $I$ to glue over and decrease the parameter $H_N$. More precisely:
\begin{lemma}\label{lem:decreasing_gluing}
Suppose that $\Psi$ is a generalized $k$-XOR game for $k\geq 1$ that has an ordered contradiction of length $N$. Then there is $I\subseteq [k]$ such that $\Psi_{\text{glue}(I)}$ has an ordered contradiction of length $N$ and $H_N(\Psi_{\text{glue}(I)})< H_N(\Psi)$.
\end{lemma}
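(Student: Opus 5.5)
The plan is to start from an ordered contradiction $(e_1,t_1),\ldots,(e_N,t_N)$, $s_1,\ldots,s_N$ that attains $H_N(\Psi)$, and to glue over the players that are responsible for the lexicographically largest reading sequence. Let $i^\ast$ be a player with $h_{i^\ast}=\max_i h_i$. Since the word of $i^\ast$ reduces to the empty word, the cancellations form a non-crossing perfect matching on the positions $1,\ldots,N$. In particular $h_{i^\ast}$ attains its maximum at some $j$ with $h_{i^\ast}(j+1)=h_{i^\ast}(j)-1$, and this step cancels position $j+1$ against some earlier position $m\le j$.

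I would take $I$ to be the set of players $l$ for which position $j+1$ cancels against position $m$ in $l$'s reading, i.e.\ $x_l(m)=x_l(j+1)$ and $s_m=-s_{j+1}$, with the cancellation happening at the same moment. This set contains $i^\ast$, so it is non-empty. Then $(e_m,e_{j+1})$ is a legitimate challenge of $\Psi_{\text{glue}(I)}$, with right-hand side $t_m-t_{j+1}$. With sign $s_m$ it contributes $\omega_p^{s_mt_m+s_{j+1}t_{j+1}}$, exactly the joint contribution of the two original constraints.

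Next I will build a length-$N$ sequence of glued challenges. For an unglued player $l$, the unprimed copy reads the original word of $l$, and the primed copy reads a word that also cancels. The guiding principle is the following bookkeeping:
\begin{itemize}
\item every original constraint $e_q$ appears exactly once in an unprimed slot, with sign $s_q$;
\item the primed slots are filled so that their total right-hand side cancels;
\item the right-hand side product therefore remains $\prod_q\omega_p^{s_qt_q}\neq 1$.
\end{itemize}

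For the filling I would try pairs $(e_q,e_{\pi(q)})$, where $\pi$ is chosen along the matching structure of $i^\ast$. The pair $(m,j+1)$ is used as above, and the remaining positions are paired with glued challenges of the form $(e_q,e_q)$, which are allowed since they trivially agree on $I$. For each unglued player $l$ I will then check two things:
\begin{itemize}
\item the unprimed word of $l$ equals the original word of $l$ with the single letter at position $j+1$ relocated, so it still reduces;
\item the primed word is a rearrangement of the same letters that reduces.
\end{itemize}

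The reading sequences of the new players are then obtained from the old ones. The old maximum $h_{i^\ast}$ is removed because $i^\ast\in I$. The new sequences are pointwise bounded by old sequences that are lexicographically at most $h_{i^\ast}$, with a strict decrease near position $j$ where the peak used to be. This would give $H_N(\Psi_{\text{glue}(I)})<H_N(\Psi)$. The degenerate case $I=[k]$ leaves a $0$-player game with constraint $t\ne 0$, whose $H_N$ is the zero sequence, which is strictly smaller.

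The main obstacle is the simultaneous bookkeeping in the construction step. The primed copies must receive a reducing word, the right-hand side must stay nontrivial, and the length must stay exactly $N$, all using a single choice of $I$. I expect that choosing $I$ as the players that cancel exactly at the peak step (rather than just $\{i^\ast\}$), and fixing $j$ to be the first position where the lexicographically largest sequence peaks, is what makes the new reading sequences strictly smaller. If the naive pairing $(e_q,e_q)$ fails, the fallback is to pair positions according to $i^\ast$'s non-crossing matching throughout. In that case every primed word is a reversal of a cancelling sub-block and automatically reduces.
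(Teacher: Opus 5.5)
There is a genuine gap. The construction you commit to does not produce an ordered contradiction, and the fallback that would work is incompatible with your choice of $I$.

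Write $h=h_{i^\ast}$; at a peak necessarily $m=j$. If every slot other than $j,j+1$ carries $(e_q,e_q)$, those slots have right-hand side $0$. The new signed right-hand side is then a multiple of $s_j(t_j-t_{j+1})$, so it vanishes whenever $t_j=t_{j+1}$. Equivalently, the primed slots do not cancel, contrary to your own bookkeeping requirement. The primed words also need not reduce, since the primed copy of $l\notin I$ reads the word of $l$ with only positions $j,j+1$ altered. In the Levi--Civita example of the appendix ($j=3$, slot $4$ carrying $(e_4,e_3)$), player $1'$ reads $1\,1^{-1}\,2\,2^{-1}\,3\,2^{-1}$, which does not reduce.

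Your fallback is the right construction: slot $q$ carries $(e_q,e_{a(q)})$ with sign $s_q$, where $a(q)$ is the opening position of $q$'s pair in the non-crossing matching of $h$. Each pair then contributes $(s_c+s_{c'})t_c=0$ on the primed side, and the primed words cancel from the inside out. But it requires $x_l(q)=x_l(a(q))$ for every $l\in I$ and every pair, not just the peak pair. A player that merely cancels $j$ against $j+1$ can have a different matching elsewhere, e.g.\ $(1,2),(3,4),(5,6)$ versus $(1,6),(2,5),(3,4)$ for $h=0123210$. In that case $(e_6,e_1)$ is not a challenge of $\Psi_{\mathrm{glue}(I)}$. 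The choice that works is $I=\{l : h_l=h\}$. The matching is determined by the up/down pattern of $h$, so all these players share it. Every $l\notin I$ still has $h_l<h$ lexicographically, which handles the unprimed copies.

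The strict decrease for the primed copies is also not established. Those sequences are not pointwise bounded by the old ones: in the appendix, player $2$ has $0121010$ while player $2'$ has $0121210$. A pointwise bound by $h$ would only give $\le$. Moreover, the position where a decrease is forced is not the peak of $h$ but the first position where $h_l$ departs from $h$.

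The missing argument is as follows. For $l\notin I$ with $g=h_l<h$, let $b$ be the first position where $g$ and $h$ differ; at $b$ the sequence $g$ goes down while $h$ goes up. Before $b$ the greedy pairings of $g$ and $h$ coincide. So at every $b'<b$ where $h$ decreases, $l$ itself cancels $b'$ against $a(b')$, and hence $x_l(a(b'))=x_l(b')$. At increases of $h$, including $b$, we have $a(b')=b'$. Thus $l'$ receives the same letters as $l$ through position $b$, its reading sequence agrees with $g$ through $b$, and it is therefore lexicographically smaller than $h$.
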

\begin{proof}
    Fix an ordered contradiction $e_1,\ldots,e_N$ with signs $s_1,\ldots,s_N$ attaining $H_N(\Psi)$. Denote the maximal sequence by $h$ and let $I$ be the set of players whose reading sequence is equal to $h$.

Consider the pairing of positions induced by $h$:
each position $j$ where $h$ increases is paired
with the position $j'>j$ where the corresponding
symbol is canceled. The pairs are non-crossing,
in the sense that they form a legal sequence
of opening and closing parentheses  `(' and `)'.
Since the pairing is determined by the sequence
of increases and decreases of $h$, it is the
same for all players in $I$.

Define $a\colon [N]\to[N]$ by mapping each
position to the opening position of its pair.
Thus, for every pair $(j,j')$ with $j<j'$,
we have $a(j)=a(j')=j$. We note that
$s_j=-s_{j'}$.

We now construct the ordered contradiction in $\Psi_{\text{glue}(I)}$. Towards this end, consider a pair of challenges $e_j$ and $e_{j'}$ from $\Psi$ where $a(j)=a(j')=j$. Take $e'_j$ to be the challenge in $\Psi_{\text{glue}(I)}$ resulting by picking $e_j$ in both items, and $e'_{j'}$ to be the challenge in $\Psi_{\text{glue}(I)}$ resulting by picking $e_{j'}$ in the first item and $e_{j}$ in the second item. Note that in the challenge $e_j'$, players $i$ and $i'$ receive the same question (which is the corresponding variable from $e_j$). In the challenge $e_{j'}'$, player $i$ gets as question the corresponding variable from $e_{j'}$, whereas player $i'$ gets as question the corresponding variable from $e_{j}$. We illustrate this construction, including its
iteration, in Appendix~\ref{app:gluing-example}.

We claim that $e'_1,\ldots,e'_N$, with the same signs $s_1,\ldots,s_N$, form an ordered contradiction.
Indeed, if $t_j$ is the right hand side of $e_j$, then the right hand side of $e'_j$ is $t_j-t_{a(j)}$, and we have
\[
\sum_{j=1}^N s_j(t_j-t_{a(j)})
=\sum_{j=1}^N s_jt_j
-\sum_{(a,b)}(s_a+s_b)t_a
=\sum_{j=1}^N s_jt_j\neq 0,
\]
where the second sum ranges over the pairs determined by $h$, and is $0$ because $s_a$, $s_b$ are different. As for the left hand sides, the word read by the
first copy of any player $i\notin I$ is unchanged,
so it cancels to the empty word. For the second
copy, at each pair of positions $(j,j')$ determined
by $h$, player $i'$ receives the same question
at both positions, with opposite signs.
Since the pairs are non-crossing, the word of
player $i'$ also cancels to the empty word,
by canceling pairs from the inside out.
Note that this need not be the greedy cancellation
pairing of player $i'$.
Thus, the new challenges still form an ordered
contradiction.

We next show that
$H_N(\Psi_{\mathrm{glue}(I)})<H_N(\Psi)$.
Fix $i\notin I$, and let $g$ be its reading
sequence, so that $g<h$.
The reading sequence of the first copy of
player $i$ is unchanged, hence is smaller than $h$.
Let $g'$ be the reading sequence of player $i'$.
We claim that $g'<h$.

Let $b$ be the first position where $g$ and $h$
differ. Since $g<h$, at position $b$ the sequence
$g$ decreases whereas $h$ increases.
We claim that players $i$ and $i'$ receive the
same questions at all positions up to and
including $b$.

Indeed, fix $b'<b$. If $h$ increases at $b'$,
then $a(b')=b'$, so both players receive the
same question. Otherwise, $h$ decreases at $b'$.
Since $g$ and $h$ agree before $b$, their
greedy cancellation pairings agree up to
that position. Thus, the questions player $i$
receives at $b'$ and $a(b')$ are equal,
so again both copies receive the same question.
Finally, $h$ increases at $b$, so $a(b)=b$,
and the questions agree there as well.

Consequently, $g'$ agrees with $g$ through
position $b$, inclusive. Since $g$ agrees with $h$ before
$b$, but $g(b)<h(b)$, we conclude that $g'<h$.
Thus, every player in the new ordered contradiction
has reading sequence strictly smaller than $h$,
completing the proof.
\end{proof}

Iterating Lemma~\ref{lem:decreasing_gluing} we get the following conclusion.
\begin{lemma}\label{lem:apply_CS_many_times}
    Suppose that $\Psi$ is a generalized $k$-XOR game that has an ordered contradiction of length $N$. Then there is $T = T(N)$ and a sequence of generalized XOR games $\Psi_0,\ldots,\Psi_T$ such that
    \begin{enumerate}
        \item $\Psi_0 = \Psi$.
        \item $\Psi_{i+1}$ is the result of $\Psi_{i}$ after gluing some set of players $I_i$.
        \item The game $\Psi_T$ has no players and at least two challenges have different right hand sides.
    \end{enumerate}
\end{lemma}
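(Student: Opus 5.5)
We iterate Lemma~\ref{lem:decreasing_gluing}, using $H_N$ as a potential that lives in a finite well-ordered set. Set $\Psi_0=\Psi$. As long as $\Psi_i$ has at least one player and has an ordered contradiction of length $N$, Lemma~\ref{lem:decreasing_gluing} supplies a nonempty set $I_i$ such that $\Psi_{i+1}:=(\Psi_i)_{\text{glue}(I_i)}$ again has an ordered contradiction of length $N$, and $H_N(\Psi_{i+1})<H_N(\Psi_i)$ in the lexicographic order. This gives the first two items directly, so the work is in termination and in the third item.

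\textbf{Termination.} Every reading sequence $h=(h(0),\ldots,h(N))$ satisfies $h(0)=0$ and $|h(j+1)-h(j)|=1$, with entries in $\{0,\ldots,N\}$. Hence $H_N(\cdot)$ takes values in the finite set of such sequences; there are at most $2^N$ of them. A strictly decreasing chain in this set has length at most $2^N$, so the process halts after some $T\le T(N):=2^N$ steps. Here ``halt'' can only mean that $\Psi_T$ has no players. If $\Psi_T$ still had a player, it would still have an ordered contradiction of length $N$, since each step preserves one. Lemma~\ref{lem:decreasing_gluing} would then apply again and force a further strict decrease, which is impossible once we are at the end of the chain. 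We can also bound $T$ explicitly: the value $(0,\ldots,0)$ assigned to the player-less game is the minimum possible, and each step strictly lowers $H_N$.

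\textbf{The third item.} $\Psi_T$ has no players and has an ordered contradiction $(e_1,t_1),\ldots,(e_N,t_N)$ with signs $s_j$. Its left-hand sides are empty, so the only condition left is $\sum_j s_jt_j\neq 0$. If every challenge in $\mathsf{supp}(\mu_{\Psi_T})$ had the same right-hand side $t$, then $\sum_j s_j t_j=t\sum_j s_j$. On the other hand, each step pairs positions with opposite signs, and already in $\Psi_0$ any single player's word cancels completely, which forces $\sum_j s_j=0$ (with $k\ge1$; the case $k=0$ is trivial). That gives $\sum_j s_jt_j=0$, a contradiction, so two challenges must have different right-hand sides.

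\textbf{Expected obstacle.} The delicate point is the boundary behaviour of the gluing when $I$ is the whole player set. In that case $\Psi_{\text{glue}(I)}$ has $0$ players, and we must confirm it is a legitimate generalized XOR game whose support consists of differences $t-t'$, to which the construction in Lemma~\ref{lem:decreasing_gluing} applies without change. We also need that the sign-balance $\sum_j s_j=0$ persists through the iteration. It does, because the signs $s_j$ are never changed by any gluing step.
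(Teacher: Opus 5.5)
Your proposal is correct and follows the paper's proof: you iterate Lemma~\ref{lem:decreasing_gluing} with $H_N$ as a potential on a finite set, stop once no players remain, and get the third item from $\sum_j s_j=0$ together with $\sum_j s_jt_j\neq 0$. One refinement: each step re-selects a contradiction attaining $H_N$, so the final signs need not be those of $\Psi_0$, and $\sum_j s_j=0$ should be justified as the paper does, from the contradiction in $\Psi_{T-1}$ (which still has a player and whose signs the last gluing preserves); for the same reason the case $k=0$ is not trivial but must be excluded, since a player-less game whose only right-hand side is a nonzero constant has an ordered contradiction of length $1$ yet violates the third item.
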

\begin{proof}
    Starting with $i=0$ and $\Psi_i = \Psi$, we apply Lemma~\ref{lem:decreasing_gluing} to find a set of players $I_i$ such that the game $\Psi_{i+1} = (\Psi_i)_{\text{glue}(I_i)}$ still has an ordered contradiction of length $N$ and $H_N(\Psi_{i+1})<H_N(\Psi_i)$. We increase $i$ by $1$ and then iterate. As for each $\Psi'$ that has an ordered contradiction of length $N$ we have that $H_N(\Psi')$ is a sequence of at most $N$ integers of size at most $N$,  we conclude that the process must terminate at some step $T$, in which $\Psi_T$ has an ordered contradiction of length $N$ and $H_N(\Psi_T) = \vec{0}$.
    Letting $s_1,\ldots,s_N$ be the signing in the contradiction and $t_1,\ldots,t_N$ be the right hand sides on $\Psi_T$, we conclude that $\sum_{i=1}^{N}s_i = 0$ (as all symbols cancel in $\Psi_{T-1}$ as it has at least one player), and $\sum\limits_{i=1}^{N}s_it_i\neq 0$, so not all of the $t_i$ can be equal.
 \end{proof}
\subsection{Tensorization}
Combining Lemma~\ref{lem:CS_once} and Lemma~\ref{lem:apply_CS_many_times} one can get a Cauchy-Schwarz based proof for $\mathsf{val}(\Psi)<1$. In the following lemma we give the tensorized version of this argument.
\begin{lemma}\label{lem:tensorize_CS}
    If $\Psi$ is a generalized $k$-XOR game that has an ordered contradiction of length $N$, then there is $\eps = \eps(N,\Psi)>0$ such that for all  $a_1,\ldots,a_n\in\mathbb{F}_p$ such that $a_i\neq 0$ for at least $d$ of the coordinates, and for all functions
    $f_i\colon \Sigma_i^{n}\to \mathbb{C}$ with absolute value at most $1$, we have that
    \[
    \card{\Expect{(x_1,\ldots,x_k,t)\sim\mu_{\Psi}^{\otimes n}}{\prod\limits_{i=1}^{k}f_i(x_i)
    \prod\limits_{i=1}^{n}\omega_p^{a_it_i}}}
    \leq (1-\eps)^d.
    \]
\end{lemma}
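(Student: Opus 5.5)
We will apply the chain $\Psi_0=\Psi,\Psi_1,\ldots,\Psi_T$ of gluings from Lemma~\ref{lem:apply_CS_many_times} coordinatewise to the $n$-fold product, using the tensorized form of Lemma~\ref{lem:CS_once}. The starting observation is that gluing commutes with taking products. The $n$-fold game $\Psi^{\otimes n}$, with its weights $\omega_p^{a_j t_j}$ on coordinate $j$, is itself a generalized $k$-XOR game over alphabets $\Sigma_i^n$, with the ``right hand side'' recorded coordinatewise. Gluing it over $I_0$ produces exactly the distribution $(\mu_{\Psi_1})^{\otimes n}$, because the conditioning $x_i'=x_i$ for $i\in I_0$ is coordinatewise. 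Hence the Cauchy--Schwarz step of Lemma~\ref{lem:CS_once}, with $\omega_p^{-at}$ replaced by $\prod_j\omega_p^{a_jt_j}$, gives
\[
\card{\Expect{\mu_{\Psi}^{\otimes n}}{\prod_i f_i(x_i)\prod_j\omega_p^{a_jt_j}}}^2\leq\card{\Expect{\mu_{\Psi_1}^{\otimes n}}{\prod_{i\notin I_0} f_i(y_i)\overline{f_i(y_{i'})}\prod_j\omega_p^{a_ju_j}}}.
\]
This holds because the right hand side of $\Psi_1$ at coordinate $j$ is $t_j-t_j'$, and $\omega_p^{a_jt_j}\overline{\omega_p^{a_jt'_j}}=\omega_p^{a_j(t_j-t_j')}$. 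The new functions $f_i\overline{f_{i'}}$ are again bounded by $1$ in absolute value. The verification is the same one-line Cauchy--Schwarz: condition on $(x_i)_{i\in I_0}$, pull out $\prod_{i\in I_0}f_i$, which has modulus at most $1$, and square.

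Iterating $T$ times, the left hand side raised to the power $2^T$ is at most
\[
\card{\Expect{(u_1,\ldots,u_n)\sim\mu_{\Psi_T}^{\otimes n}}{\prod_j\omega_p^{a_ju_j}}}=\prod_{j=1}^n\card{\Expect{u\sim\mu_{\Psi_T}}{\omega_p^{a_ju}}}.
\]
Since $\Psi_T$ has no players, no functions remain, and the expectation factorizes over the independent coordinates. Coordinates with $a_j=0$ contribute a factor of $1$. For $a_j\neq0$, the map $u\mapsto a_ju$ is a bijection of $\mathbb{F}_p$, and by Lemma~\ref{lem:apply_CS_many_times} the distribution $\mu_{\Psi_T}$ puts positive mass on at least two distinct values of $u$. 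Hence $\omega_p^{a_ju}$ takes at least two distinct unit values with positive probability, and strict convexity gives $\card{\Expect{}{\omega_p^{a_ju}}}\leq 1-\delta$. Here $\delta>0$ depends only on $\mu_{\Psi_T}$ (take the minimum over the finitely many $a\neq0$), and so only on $\Psi$ and $N$. The product is therefore at most $(1-\delta)^d$, and taking $2^T$-th roots yields the bound with $1-\eps=(1-\delta)^{1/2^T}$, where $T=T(N)$.

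The main point requiring care is confirming that the product structure is genuinely preserved by each gluing. Concretely, $(\Psi^{\otimes n})_{\text{glue}(I)}$ must equal $(\Psi_{\text{glue}(I)})^{\otimes n}$ as distributions over challenges, including the way the coordinatewise right hand sides transform. One must also confirm that the gluing sets $I_i$ chosen for the single-copy game (which depend only on $\Psi$ and $N$, not on $n$ or the $a_j$) remain valid glue sets in the product, i.e.\ they are nonempty subsets of the current players. Both follow from the fact that the conditional distribution of $x'$ given $x_I=x'_I$ under a product measure is the product of the coordinatewise conditionals. A minor point is that the players are relabeled at each step, with their number changing from $k$ to $2(k-|I|)$, but the functions stay bounded by $1$ throughout, which is all that Lemma~\ref{lem:CS_once} needs.
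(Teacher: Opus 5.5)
Your proposal is correct and follows the paper's proof essentially step for step. You apply the tensorized Cauchy--Schwarz of Lemma~\ref{lem:CS_once} along the gluing chain of Lemma~\ref{lem:apply_CS_many_times}, bound the $2^T$-th power by a player-free expectation that factorizes over coordinates, and bound each nonzero-$a_j$ factor by $1-\delta$ because $\mu_{\Psi_T}$ has two distinct right hand sides in its support; your explicit check that gluing commutes with the $n$-fold product is exactly what the paper uses implicitly when passing from $M(d,\Psi_0,\cdot)^2$ to $M(d,\Psi_1,\cdot)$.
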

\begin{proof}
Let $M(d,\Psi,a_1,\ldots,a_n)$ be the maximum of the left hand side over all choices of $f_1,\ldots,f_k$.
Fix $\Psi_0,\ldots,\Psi_T$ from Lemma~\ref{lem:apply_CS_many_times}, and let $I_0,\ldots,I_{T-1}$ be the sets of players that get glued in each step. Note that by Lemma~\ref{lem:CS_once} there
are $1$-bounded functions $F_{i},F_{i'}\colon \Sigma_i^{n}\to \mathbb{C}$ for all $i\not\in I_0$, such that
\begin{align*}
\card{\Expect{(x_1,\ldots,x_k,t)\sim\mu_{\Psi}^{\otimes n}}{\prod\limits_{i=1}^{k}f_i(x_i)
\prod\limits_{i=1}^{n}\omega_p^{a_it_i}}}^2\leq
\card{\Expect{(x,x',t,t')\sim\mu_{\Psi_1}^{\otimes n}}{\prod\limits_{i\not\in I_0}F_i(x_i)F_{i'}(x_i')
\prod\limits_{i=1}^{n}\omega_p^{a_i(t_i-t_i')}}},
\end{align*}
which gives that $M(d,\Psi_0,a_1,\ldots,a_n)^2\leq M(d,\Psi_1,a_1,\ldots,a_n)$. Iterating, we get that
\[
M(d,\Psi_0,a_1,\ldots,a_n)^{2^T}\leq M(d,\Psi_T,a_1,\ldots,a_n).
\]
Note that as $\Psi_T$ has no players,
\[
M(d,\Psi_T,a_1,\ldots,a_n)
=\card{\Expect{(x,t)\sim \mu_{\Psi_T}^{\otimes n}}{\prod\limits_{i=1}^{n}\omega_p^{a_it_i}}}
=
\prod\limits_{i=1}^{n}\card{\Expect{(x,t)\sim \mu_{\Psi_T}}{\omega_p^{a_it}}}.
\]
As $\Psi_T$ has an ordered contradiction, for every $i$ such that $a_i\neq 0$ we get that
$\omega_p^{a_it}$ is not constant over the support of $\mu_{\Psi_T}$, so
$\card{\Expect{(x,t)\sim\mu_{\Psi_T}}{\omega_p^{a_it}}}<1$.
It follows that $M(d,\Psi_T,a_1,\ldots,a_n)\leq (1-\delta)^d$ for some $\delta = \delta(\Psi)>0$, so $M(d,\Psi,a_1,\ldots,a_n)\leq (1-\delta)^{d/2^{T}}\leq (1-\eps)^{d}$ where $\eps = \eps(\delta,T)>0$.
\end{proof}
\section{Proof of Theorem~\ref{thm:main}}
Fix a $k$-XOR game $\Psi$ as in Theorem~\ref{thm:main}, and let $\Psi'$ be the $2k$-XOR game from Lemma~\ref{lem:BLR_trick_strong}.
Then we conclude that $\Psi'$ has no commuting embeddings, so by Lemma~\ref{lem:nocommute_orderedcontra} it has an ordered contradiction.
\begin{claim}\label{claim:combine}
There exists $\eps = \eps(\Psi)>0$ such that
    $\mathsf{val}((\Psi')^{\otimes n})\leq (1-\eps)^n$ for all $n$.
\end{claim}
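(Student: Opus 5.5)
We follow the Fourier expansion from the introduction, applied to $\Psi'$ as a $2k$-XOR game over $\mathbb{F}_p$. Fix strategies $f_1,\ldots,f_{2k}$ for $(\Psi')^{\otimes n}$, where $f_i\colon (\Sigma_i')^n\to\mathbb{F}_p^n$. On a challenge $(y,t)$, coordinate $j$ is accepted iff $\sum_i f_i(y_i)(j)=t_j$. Using $\mathbbm{1}[z=0]=\frac1p\sum_{a\in\mathbb{F}_p}\omega_p^{az}$ coordinatewise, the acceptance probability becomes
\[
\frac{1}{p^n}\sum_{a\in\mathbb{F}_p^n}\Expect{(y,t)\sim\mu_{\Psi'}^{\otimes n}}{\prod_{i=1}^{2k}F_{i,a}(y_i)\prod_{j=1}^n\omega_p^{-a_jt_j}},
\]
where $F_{i,a}(y_i)=\omega_p^{\sum_j a_j f_i(y_i)(j)}$ is $1$-bounded. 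We regard $\Psi'$ as a generalized $2k$-XOR game with distribution on $(y,t)$ where $t=\gamma\,t(x)$. The argument preceding the claim shows that $\Psi'$ has an ordered contradiction of some length $N$. This uses Lemma~\ref{lem:BLR_trick_strong} together with Lemma~\ref{lem:nocommute_orderedcontra}, whose ``no embedding $\Rightarrow$ contradiction'' direction was proved for generalized games, so it applies here. Lemma~\ref{lem:tensorize_CS}, applied with the signs $-a_j$ (which have the same support as $a$), bounds the summand for $a$ by $(1-\eps_0)^{|\mathrm{supp}(a)|}$ for some $\eps_0=\eps_0(\Psi')>0$.

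Summing over $a$ by weight gives
\[
\mathsf{val}((\Psi')^{\otimes n})\le \frac{1}{p^n}\sum_{d=0}^n\binom{n}{d}(p-1)^d(1-\eps_0)^d=\left(\frac{1+(p-1)(1-\eps_0)}{p}\right)^n=\left(1-\frac{(p-1)\eps_0}{p}\right)^n.
\]
So $\eps=(p-1)\eps_0/p>0$ works, and it depends only on $\Psi$, since $\Psi'$, $N$ and $\eps_0$ are determined by $\Psi$.

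The only point needing care is that the ordered contradiction is for the fixed single-copy game $\Psi'$, so $N$ and hence $\eps_0$ do not depend on $n$. It remains to check that Lemma~\ref{lem:tensorize_CS} applies uniformly over all $a$ and all $1$-bounded $F_{i,a}$, and this is exactly its statement. Beyond that the proof is routine bookkeeping, and the main content already sits in the preceding lemmas.
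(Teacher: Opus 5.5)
Your proposal is correct and follows the paper's proof essentially verbatim. Both arguments Fourier-expand the coordinatewise acceptance indicator, bound each term indexed by $a\in\mathbb{F}_p^n$ by $(1-\eps_0)^{|\mathrm{supp}(a)|}$ via Lemma~\ref{lem:tensorize_CS}, and sum; your explicit evaluation $\bigl(1-\tfrac{(p-1)\eps_0}{p}\bigr)^n$ and your remark on the sign of $a$ just fill in bookkeeping the paper leaves implicit.
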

\begin{proof}
Let $A_1,\ldots,A_{2k}$ be any strategy for $(\Psi')^{\otimes n}$, we see that
\begin{align*}
&\Prob{(x_1,\ldots,x_{2k},t)\sim \mu_{\Psi'}^{\otimes n}}{A(x_1)_i+\ldots+A(x_{2k})_i=t_i~\forall i}\\
&\qquad
=\Expect{(x_1,\ldots,x_{2k},t)\sim \mu_{\Psi'}^{\otimes n}}{\prod\limits_{i=1}^{n}\frac{1}{p}\left(\sum\limits_{a_i\in\mathbb{F}_p}\omega_p^{a_i(A(x_1)_i+\ldots+A(x_{2k})_i-t_i)}\right)}\\
&\qquad
=\frac{1}{p^n}\sum\limits_{a_1,\ldots,a_n\in\mathbb{F}_p^n}
\Expect{(x_1,\ldots,x_{2k},t)\sim \mu_{\Psi'}^{\otimes n}}{f_{\vec{a},1}(x_1)\cdots f_{\vec{a},2k}(x_{2k}) \prod\limits_{i=1}^{n}\omega_p^{-a_it_i}},
\end{align*}
where $f_{\vec{a},j}(x_j) = \prod\limits_{i=1}^{n}\omega_p^{a_i A(x_j)_i}$. Using Lemma~\ref{lem:tensorize_CS} we get that the last sum is at most
\[
\frac{1}{p^n}\sum\limits_{a_1,\ldots,a_n\in\mathbb{F}_p^n}(1-\eps)^{\card{\sett{i}{a_i\neq 0}}}
\leq (1-\eps')^{n},
\]
where $\eps,\eps'>0$ depend only on $\Psi$.
\end{proof}

We next upper bound the value of the $n$-fold repeated version of $\Psi$ by the value of the $n$-fold repeated version of $\Psi'$.
\begin{claim}\label{claim:upperboundval_BLR}
    $\mathsf{val}(\Psi^{\otimes n})\leq \mathsf{val}((\Psi')^{\otimes n})$.
\end{claim}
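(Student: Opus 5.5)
We prove Claim~\ref{claim:upperboundval_BLR} by lifting any strategy for $\Psi^{\otimes n}$ to a strategy for $(\Psi')^{\otimes n}$ with at least the same winning probability, coordinate by coordinate, as in the informal argument in Section~\ref{sec:approach}.

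\textbf{The lifted strategy.} Fix strategies $g_1,\ldots,g_k$ for $\Psi^{\otimes n}$, where $g_i\colon \Sigma_i^n\to\mathbb{F}_p^n$. In $(\Psi')^{\otimes n}$, player $i$ receives $(x_i,\alpha_i)$ with $x_i\in\Sigma_i^n$ and $\alpha_i\in\mathbb{F}_p^n$. Player $i'$ receives $(x_i,\beta_i)$, where $\beta_i=\alpha_i+\gamma$ coordinatewise and $\gamma\in\mathbb{F}_p^n$. We define the strategies coordinatewise by
\[
f_i(x_i,\alpha_i)(j) = -\alpha_i(j)\,g_i(x_i)(j),
\qquad
f_{i'}(x_i,\beta_i)(j) = \beta_i(j)\,g_i(x_i)(j).
\]
Each player still uses only its own question. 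In particular, player $i'$ sees $x_i$, since the question of $i'$ in $\Psi'$ contains $x_i$.

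\textbf{Checking the acceptance condition.} In coordinate $j$ the verifier checks
\[
\sum_{i}\bigl(f_{i'}(x_i,\alpha_i+\gamma)(j)+f_i(x_i,\alpha_i)(j)\bigr)=\gamma(j)\,t(x_1(j),\ldots,x_k(j)).
\]
The left-hand side equals $\gamma(j)\sum_i g_i(x_i)(j)$.
\begin{itemize}
\item If $(g_i)$ wins coordinate $j$ of $\Psi^{\otimes n}$, then $\sum_i g_i(x_i)(j)=t(x_1(j),\ldots,x_k(j))$, so the lifted strategy wins coordinate $j$ of $(\Psi')^{\otimes n}$.
\item It may also win coordinates where $\gamma(j)=0$, but this only helps.
\end{itemize}
Hence the event that $(g_i)$ wins all coordinates implies the event that the lifted strategy wins all coordinates.

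\textbf{Comparing distributions.} The marginal distribution of $(x_1,\ldots,x_k)$ in $(\Psi')^{\otimes n}$ is exactly $\mu^{\otimes n}$, and it is independent of $\alpha$ and $\gamma$. Therefore
\[
\Prob{}{\text{lifted strategy wins}}\geq \Prob{x\sim\mu^{\otimes n}}{(g_i)\text{ wins}}.
\]
Taking the supremum over all strategies $(g_i)$ yields $\mathsf{val}(\Psi^{\otimes n})\leq\mathsf{val}((\Psi')^{\otimes n})$.

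There is no real obstacle in this argument. The only step needing care is confirming that the repeated game $(\Psi')^{\otimes n}$ samples $\gamma$ and the $\alpha_i$ independently in each coordinate, so that the coordinatewise identity applies in every coordinate simultaneously.
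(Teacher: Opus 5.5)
Your proposal is correct and is essentially the paper's proof: you lift $A_i$ to $B_i(x,\alpha)=-(\alpha_j A_i(x)_j)_j$ and $B_{i'}(x,\alpha)=(\alpha_j A_i(x)_j)_j$, observe that coordinate $j$ of the left-hand side collapses to $\gamma(j)\sum_i A_i(x_i)_j$, and conclude that every win for $\Psi^{\otimes n}$ is a win for $(\Psi')^{\otimes n}$. The step you flag as needing care is not actually needed: the implication holds pointwise for every choice of $\alpha,\gamma\in\mathbb{F}_p^n$, so all you use is that the marginal of $(x_1,\ldots,x_k)$ is $\mu^{\otimes n}$.
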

\begin{proof}
    Letting $A_1,\ldots,A_k$ be a strategy for $\Psi^{\otimes n}$, we define the strategies
    \[
    B_{i}(x,\alpha) = -(\alpha_1 A_i(x)_1,\ldots,\alpha_n A_i(x)_n),
    \qquad
    B_{i'}(x,\alpha) = (\alpha_1 A_i(x)_1,\ldots,\alpha_n A_i(x)_n)
    \]
    for $\Psi'$. We note that for each $(x_1,\ldots,x_k)\sim \mu_{\Psi}^{\otimes n}$, if $A_1,\ldots,A_k$ succeed on challenges $x_1,\ldots,x_k$, then $B_1,B_{1'},\ldots,B_{k},B_{k'}$ succeed on $(x_1,\alpha(1)),(x_1,\alpha(1)+\gamma),\ldots,(x_{k},\alpha(k)), (x_k,\alpha(k)+\gamma)$ for all choices of $\alpha(1),\ldots,\alpha(k),\gamma\in\mathbb{F}_p^n$. The result now follows.
\end{proof}
Combining Claim~\ref{claim:combine} and Claim~\ref{claim:upperboundval_BLR} finishes the proof of Theorem~\ref{thm:main}.
\bibliographystyle{plain}
\bibliography{ref}
\newpage

\appendix
\section{An example: the Levi--Civita game}
\label{app:gluing-example}
\label{app:levi-civita-gluing}

The verifier in the Levi--Civita (permutation-sign) game samples $\pi\in S_3$,
sends $\pi(i)$ to player $i$, receives bits $a_i\in \mathbb{F}_2$, and accepts if $a_1+a_2+a_3=1_{\operatorname{sgn}(\pi)=1}\pmod{2}$; see also
\cite{braverman2026parity}. Its exponential parallel repetition bound
follows from the earlier Fourier-analytic and additive-combinatorial
approach of \cite[Theorem~2]{BhangaleEtAl2025}, and here we give an explicit Cauchy--Schwarz proof via the gluing
construction of Lemma~\ref{lem:decreasing_gluing} following the recipe in Section~\ref{sec:ordcont}.
Take the following six challenges and signs:
\[
\begin{array}{c|rrrrrr}
 j   & 1 & 2 & 3 & 4 & 5 & 6 \\ \hline
 e_j & 123 & 132 & 231 & 321 & 312 & 213 \\
 t_j & 0 & 1 & 0 & 1 & 0 & 1 \\
 s_j & +1 & -1 & +1 & -1 & +1 & -1
\end{array}
\]
Here $t_j\in\mathbb F_2$ are the targets, while the signs $s_j$
specify inverses. Inspecting the words each player receives, we see that they reduce to the empty word, and their reading sequences are
$h_1=0101210$, $h_2=0121010$, and $h_3=0123210$ respectively.
Since $\sum_{j=1}^6s_jt_j=1$ in $\mathbb F_2$, this is an ordered
contradiction.

The maximal reading sequence is $h=h_3$, so $I=\{3\}$.
Its pairing is $(1,6),(2,5),(3,4)$, giving
$(a(1),\ldots,a(6))=(1,2,3,3,2,1)$.
Following the proof, glue rows $1,2,3$ with themselves, and rows $4,5,6$
with rows $3,2,1$, respectively. Each pair agrees on player $3$'s
question. Deleting this player gives the following rows, with players
ordered as $1,2,1',2'$:
\[
\begin{array}{c|rrrrrr}
 j & 1 & 2 & 3 & 4 & 5 & 6 \\ \hline
 e'_j & 1212 & 1313 & 2323 & 3223 & 3113 & 2112 \\
 t_j-t_{a(j)} & 0 & 0 & 0 & 1 & 1 & 1
\end{array}
\]
For example, $e'_4$ glues $321$ with $231$, giving target $1-0=1$.

The signs $s_j$ and the signed target sum are unchanged.
The reading sequences, in player order $1,2,1',2'$, are now
\[
0101210,\qquad 0121010,\qquad 0101010,\qquad 0121210,
\]
all strictly smaller than $h=0123210$.
We remark that the words of players $1'$ and $2'$ admit the pairing of $h$, but it need not be their
greedy pairing: player $2'$ cancels positions $2,3$ immediately.
We also note that, as in the proof, both copies of player $2$ receive the same questions, namely $2,3,3$, until position $3$, which is the first position their reading sequences decrease while $h$ increases.

We continue running the process for a total of size times, and record below which player is being glued at each step, what is their reading sequence, and what is the pairing $a$.
The order of the players in a new row is their order in the previous row, where the primed players appear first.
\begin{center}
\renewcommand{\arraystretch}{1.08}
\begin{tabular}{c|c|c|c|c}
Gluings performed & Players & $h$ & $I$ & $(a(1),\ldots,a(6))$ \\
\hline
0 & 3  & $0123210$ & $\{3\}$     & $(1,2,3,3,2,1)$ \\
1 & 4  & $0121210$ & $\{4\}$     & $(1,2,2,4,4,1)$ \\
2 & 6  & $0121010$ & $\{2,5\}$   & $(1,2,2,1,5,5)$ \\
3 & 8  & $0101210$ & $\{1,3,4\}$ & $(1,1,3,4,4,3)$ \\
4 & 10 & $0101010$ & $[10]$      & $(1,1,3,3,5,5)$ \\
5 & 0  & $0000000$ & ---         & ---
\end{tabular}
\end{center}
After four gluings, rows $5$ and $6$ have the same question tuple
$(1,3,1,3,2,2,1,1,1,1)$, but targets $0$ and $1$, respectively.
The last gluing deletes all ten players, leaving targets $(0,0,0,0,0,1)$.
At every step the maximal reading sequence decreases and the signed
target sum remains $1$.
\end{document}